\documentclass[journal]{IEEEtran}

\usepackage{amsmath}
\interdisplaylinepenalty=2500
\usepackage{amssymb,mathrsfs} 
\usepackage{graphicx}
\usepackage{amsthm}

\usepackage{subfig}

\usepackage{bm}
\usepackage{bbm}
\usepackage{amssymb}  
\usepackage{pifont}
\usepackage{array} 
\usepackage{tabularx} 
\usepackage{epstopdf}
\usepackage{dsfont}
\usepackage{color}
\usepackage{epsfig}
\usepackage{amsthm}
\makeatletter

\newcommand{\Rmnum}[1]{\expandafter\@slowromancap\romannumeral #1@}
\makeatother

\usepackage{balance}
\usepackage[utf8]{inputenc}
\usepackage[english]{babel}

\usepackage[mathscr]{euscript}
\usepackage{blkarray,booktabs, bigstrut}
\usepackage{arydshln}

\newtheorem{theorem}{Theorem}
\newtheorem{lemma}{Lemma}

\newtheorem{definition}{Definition}

\addto\captionsenglish{}
\usepackage[labelsep=period]{caption}

\setlength\unitlength{1mm}

\long\def\comment#1{}


\newfont{\bbb}{msbm10 scaled 700}



\newcommand{\av}{{\bf a}}

\newcommand{\cv}{{\bf c}}

\newcommand{\ev}{{\bf e}}
\newcommand{\fv}{{\bf f}}

\newcommand{\pv}{{\bf p}}

\newcommand{\xv}{{\bf x}}
\newcommand{\yv}{{\bf y}}
\newcommand{\zv}{{\bf z}}

\newcommand{\zerov}{{\bf 0}}


\newcommand{\Am}{{\bf A}}

\newcommand{\Dm}{{\bf D}}
\newcommand{\Em}{{\bf E}}

\newcommand{\Hm}{{\bf H}}
\newcommand{\Id}{{\bf I}}

\newcommand{\Km}{{\bf K}}

\newcommand{\Wm}{{\bf W}}



\newcommand{\Ec}{{\cal E}}

\newcommand{\Gc}{{\cal G}}
\newcommand{\Hc}{{\cal H}}

\newcommand{\Nc}{{\cal N}}

\newcommand{\Rc}{{\cal R}}

\newcommand{\Vc}{{\cal V}}


\newcommand{\RNum}[1]{\uppercase\expandafter{\romannumeral #1\relax}}


\newcommand{\thetav}{\hbox{\boldmath$\theta$}}

\newcommand{\Sigmam}{\hbox{\boldmath$\Sigma$}}


\renewcommand{\arg}{{\hbox{arg}}}



\newcommand{\squeezeequ}{\medmuskip=2mu \thinmuskip=1mu \thickmuskip=3mu}

\def\LRT#1#2{\!
\raisebox{.2ex}{$
{{\scriptstyle\;#1}\atop{\displaystyle\gtrless}}
\atop
{\raisebox{-1.25ex}{$\scriptstyle\;#2$}}
$}
\!}

\begin{document}
%
\title{Stealth Attacks Against Moving Target Defense \\for Smart Grid}
%
%
%
\author{Ke Sun, 
    I\~naki Esnaola,
    and H. Vincent Poor

\thanks{
K. Sun is with the College of Computer Engineering and Science, Shanghai University, Shanghai, China
(email: ke\_sun@shu.edu.cn, cn.kesun@gmail.com).

I. Esnaola is with the School of Electrical and Electronic Engineering, University of Sheffield, Sheffield S1 3JD, UK, and also with the Department of Electrical and Computer Engineering, Princeton University, Princeton, NJ 08544 USA
(email: esnaola@sheffield.ac.uk).

H.~V. Poor is with the Department of Electrical and Computer Engineering, Princeton
University, Princeton, NJ 08544 USA 
(e-mail: poor@princeton.edu).
}}
%



\maketitle

\begin{abstract}
Data injection attacks (DIAs)  pose a significant cybersecurity threat to the Smart Grid by enabling an attacker to compromise the integrity of data acquisition and manipulate estimated states without triggering bad data detection procedures.
To mitigate this vulnerability, the moving target defense (MTD) alters branch admittances to mismatch the system information that is available to an attacker, thereby inducing an imperfect DIA construction that results in degradation of attack performance.
In this paper, we first analyze the existence of stealth attacks for the case in which the MTD strategy only changes the admittance of a single branch. 
Equipped with this initial insight, we then extend the results to the case in which multiple branches are protected by the MTD strategy.
Remarkably, we show that stealth attacks can be constructed with information only about which branches are protected, without knowledge about the particular admittance value changes.
Furthermore, we provide a sufficient protection condition for the MTD strategy via graph-theoretic tools that guarantee that the system is not vulnerable to DIAs.
Numerical simulations are implemented on IEEE test systems to validate the obtained results. 
\end{abstract}

\begin{IEEEkeywords}
Data injection attacks, stealth attacks, moving target defense, incomplete system information, graph theory
\end{IEEEkeywords}

%
\IEEEpeerreviewmaketitle

\section{Introduction}
%
%
%
%

\IEEEPARstart{D}{ata} injection attacks (DIAs) are cybersecurity threats that compromise the data integrity of state estimation procedures within the Smart Grid (SG).  
Specifically, in a DIA, an attacker comprises measurements (i.e, the data) used for state estimation by exploiting vulnerabilities in sensing or data acquisition procedures, deviating estimated states from their original values to disrupt the operation of the system \cite{giani_viking_2009,liu_false_2009}. 
The attacker also aims to ensure the stealthiness of the attacks, i.e. to bypass any bad data detection (BDD) procedure deployed by the gird operator.
For that reason,  the specification of the estimation method and the BDD procedure dictate the information required by the attacker to construct stealth attacks.
For example, for weighted least squares (WLS) estimation under a linearized system model, the attacker needs the system Jacobian matrix as prior information to be able to bypass residual-based detection \cite{liu_false_2011}. 
For minimum mean squared error estimation within a Bayesian framework, the attacker requires both the Jacobian matrix and the second-order statistics of the state variables to optimize the tradeoff between disruption and probability of detection \cite{kosut_malicious_2011, esnaola_maximum_2016, Sun_information-theoretic_2017, Sun_Stealth_2020}. 

Moving target defense (MTD) is an effective cyberthreat defense strategy that aims to change the system information or operating status to create information mismatches between operator and attacker, which in turn increases the uncertainty and complexity for attack construction \cite{MTD}. 
Specifically, for the SG paradigm, the operator can induce information mismatches by changing the admittance of specific branches \cite{rahman_2014_moving, zhang_2019_analysis, lakshminarayana_2020_cost, liu_2020_optimal, zhang_2020_hiddenness, xu_2022_robust} through flexible AC transmission system (FACTS) devices \cite{hingorani_2007_facts, zhang_2012_flexible}, or by changing the system topology \cite{morrow_2012_topology, wang_2015_effects} via transmission line switches \cite{brown_2020_transmission}. 
Beyond these techniques, mismatched information can also be achieved by introducing randomness in the measurement set \cite{rahman_2014_moving, miao_2024_robust}, e.g. changing communication routing paths \cite{heydari_2018_moving, hu_2021_network}. 
In summary, the MTD aims to make the system information that the attacker has access to inaccurate \cite{lakshminarayana_2024_survey}.
Even if the attacker identifies the inaccurate information, discarding this information renders the system information ``incomplete'' for the attacker and the attack construction is hindered. 

From the perspective of the operator, inducing inaccuracies or incompleteness in system information may aid in increasing the probability of detecting DIAs via BDD.
For example, it is shown in \cite{rahman_false_2012} that an inaccurate system Jacobian matrix indeed increases the probability of detection under residual-based detection. 
For the MTD strategy implemented by changing branch admittance, the operator must alter the admittance of a specific number of branches to eliminate the existence of stealth attacks \cite{zhang_2019_analysis, lakshminarayana_2020_cost}.  
 {\it It is important to note that these findings are based on the assumption of knowing the specific admittance changes.}
From the perspective of the attacker, inaccurate or incomplete system information imposes the requirement of learning unknown system information from the available (but incomplete in general) information; thereby increasing complexity for the attacker.
For example, the attacker can infer the system topology \cite{li_blind_2013,yu_blind_2015} or the statistical information of the state variables \cite{sun_2019_learning, sun_2023_asymptotic} from available measurements.
In that setting, the results in \cite{sun_2019_learning,sun_2023_asymptotic} establish the theoretical connection between the number of available measurements and the attack performance. 

In this paper, we address two questions that are central to the secure operation of SGs: 
\raisebox{.5pt}{\textcircled{\raisebox{-.9pt} {1}}} Can an attacker construct stealth attacks without information about the admittance changes? 
\raisebox{.5pt}{\textcircled{\raisebox{-.9pt} {2}}}  Is there an effective method for selecting the branches that are to be protected by MTD to mitigate the impact of stealth attacks?
In particular, depending on the number of MTD protected branches, we consider two cases, i.e. the Single Branch MTD case and the Multiple Branch MTD case. 
Firstly, we consider the Single Branch MTD case, in which we prove that stealth attacks can be constructed without the information about the admittance change. 
Simply knowing which branch is protected enables the attacker to circumvent the BDD and to manipulate the estimated states.
Then we extend the results in the Single Branch MTD case to the Multiple Branch MTD case, in which we provide a sufficient condition for the existence of stealth attacks when the admittance changes are unknown.
Based on this finding, we propose a sufficient condition for the MTD strategy to limit the feasibility of stealth attack constructions employing a graph-based model of the power system. 
Specifically, if the branches protected by MTD form a spanning tree, then there does not exist an attack that fulfils the sufficient condition under the Multiple Branch MTD case.
 
The contributions of the paper are summarised in the following. 
\begin{itemize}
\item {\it System Model}: It is proved in Lemma \ref{lemma:rank} that the Jacobian matrix directly built via the branch-bus incidence matrix and branch admittance matrix is rank-deficient, and that the rank is the number of state variables minus one{\footnote{Many previous works on DIAs overlook this fact. Note that this fact also shows that if a reference bus is chosen, using the angular difference w.r.t. the reference bus as state variables reduces one state variable, which guarantees that the Jacobian matrix has full rank.}}.
For this case, the estimated state variables are not unique under WLS. 
Moreover, Lemma \ref{lemma:N1} provides an alternative expression for the Jacobian matrix, which describes the Jacobian matrix as the sum of single branch related independent matrices.

\item {\it Stealth Attack Construction}: Theorem \ref{Theorem:2} and Theorem \ref{Theorem:3} show that stealth attacks still exist even without the information about the admittance change. 
Furthermore, the rationale for the existence is provided in Lemma \ref{lemma2} and Lemma \ref{lemma3}. 

\item {\it Countermeasure against Stealth Attack}: Lemma \ref{lemma4} shows a sufficient condition for MTD strategy to eliminate the existence of stealth attacks in Theorem \ref{Theorem:2} and Theorem \ref{Theorem:3}. 
The minimum number of MTD protected branches required to safeguard the system against stealth attacks is also given by Lemma \ref{lemma4}.
\end{itemize}

The rest of the paper is organized as follows. 
In Section \ref{Sec:System_Model}, the system model is introduced, in which we show the rank-deficient property of the Jacobian matrix built using branch-bus incidence matrix and branch admittance matrix. 
In Section \ref{Sec:SBMTD}, the stealth attack construction strategy is proposed for the Single Branch MTD case to prove the existence of stealth attacks. 
The Multiple Branch MTD case in considered in Section \ref{Sec:MBMTD}, in which we also show the countermeasure for the stealth attack construction strategy we considered in the previous sections. 
Section \ref{Sec:Sim} shows the numerical simulation results in Section  \ref{Sec:SBMTD} and  \ref{Sec:MBMTD}. 
The paper ends with conclusions in Section \ref{Sec:Con}.

\section{System Model} \label{Sec:System_Model}
The notation used in this paper is described below. 
Matrices are represented by boldface capital letters, such as $\Am$; and vectors are represented by boldface lowercase letters, such as $\av$.
Moreover, the vector of all zero elements is represented by $\zerov$.
Sets are represented by calligraphic font, such as $\Rc$ denoting the set of real numbers. 
The $i$-th entry of a vector $\av$ is represented by $a_i$.
The $i$-th row and the $i$-th column of a matrix $\Am$ are represented by $\Am_{i, \cdot}$ and $\Am_{\cdot,i}$, respectively; 
and the entry in the $i$-th row and $j$-th column of matrix $\Am$ is represented by $A_{i,j}$. 
The matrix that is formed by selecting a given set of columns from $\Am$ is represented by $\Am_{\cdot, \mathcal{I}}$, where $\mathcal{I}$ is the set of the column indexes. 
The branches of the power system are represented by $\mathsf{e}_{\cdot}$, where the subscript is the index of the branch. 
The $\ell_2$ norm of a vector is denoted by $\| \cdot \|_{2}$, and $| \cdot |$ represents the cardinality of a set. 

\subsection{State Estimation and Abnormality Detection}
The measurement (or observation) model for state estimation under a linearized system model is described by  
\begin{IEEEeqnarray}{c}
\yv = \Hm \xv + \zv, \label{Equ:obs}
\end{IEEEeqnarray}
where $\yv \in \Rc^{m}$ is a vector of the measurements; 
$\xv \in \Rc^{n}$ is a vector of the state variables;
$\Hm \in \Rc^{m \times n}$ is the linearized Jacobian measurement matrix, which is determined by the system topology, the admittances of the branches, and the operating point; 
$\zv \in \Rc^{m}$ is additive noise within the system. 
Note that the noise term $\zv$ is usually modeled as realizations of a vector of random variables $Z^m \in  \Rc^{m}$, with $Z^m \sim \Nc \left( \zerov, \Sigmam\right)$ and $\Sigmam$ denoting the covariance matrix \cite{grainger_power_1994, abur_power_2004}. 

Under the assumption that $\Hm$ is a full rank matrix, the WLS estimator that minimizes the estimation residual is given by 
\begin{IEEEeqnarray}{rl}
\hat{\xv} &= \arg \  \underset{\ev \in \mathcal{R}^{n}}{\min} \| \Wm \left(\yv - \Hm \ev \right)\|_{2}^2 \label{Equ:WLS}\\
&= \left( \Hm^{\sf T} \Wm \Hm \right)^{-\!1}\Hm^{\sf T}\Wm \yv \IEEEeqnarraynumspace \label{Equ:WLS0} \\
&\triangleq \Km \yv, \label{Equ:WLS1}
\end{IEEEeqnarray}
and the resulting estimation residual $r: \Rc^{m} \to \Rc^{+}$ is given by
\begin{IEEEeqnarray}{c}
r(\yv) \triangleq  \| \yv - \Hm \hat{\xv} \|_{2}^2 = \| \yv - \Hm \Km \yv \|_{2}^2, \label{Equ:Res}
\end{IEEEeqnarray}
where $\Wm \in \Rc^{m \times m}$ is a diagonal matrix whose entries are the assigned weights and are often in practical settings set to be the reciprocals of the variances of the noise \cite[(8.16)]{kay_1993_fundamentals}.
Note that when $\Hm$ is a full rank matrix, the state estimate, i.e. $\hat{\xv}$ in \eqref{Equ:WLS}, is unique for a given measurement vector $\yv$.

It is common practice in power systems for the operator of the system to compare the estimation residual with a predefined threshold to decide whether the measurements produced by the data acquisition system reflect normal operation, or on the contrary, an abnormal behavior is observed. This procedure is usually referred to as BDD.
Specifically, the operator performs the test given by
\begin{IEEEeqnarray}{c}
r \LRT{\Hc_1}{\Hc_0} \tau \label{Equ:ResTest}
\end{IEEEeqnarray}
that aims to solve the hypothesis testing problem 
\begin{IEEEeqnarray}{rl}
\Hc_0: & \textnormal{ no abnormality}\\
\Hc_1: & \textnormal{ abnormality exists}
\end{IEEEeqnarray}
to detect the existence of abnormalities, where $\tau \in \Rc$ is a predefined threshold that yields a certain probability of false alarm and probability of detection tradeoff \cite{abur_power_2004}. 

\subsection{Structure of the Jacobian Matrix}
The non-zero entries of the Jacobian matrix $\Hm$ are determined by the system topology and the ordering of the elements in the measurement vector $\yv$.
The values of the entries in $\Hm$ are determined by the admittance of the branches. 
Let $n+1$ and $l$ be the number of buses and branches in the system, respectively; ${\cal E} = \{\mathsf{e}_1, \ldots, \mathsf{e}_l\}$ be the set of all branches; $b_k$ with $k \in \{1, \ldots, l \}$ be the admittance of branch $\mathsf{e}_k$; $\Am \in \{-1, 0,1\}^{l \times (n+1)}$ be the branch-bus incidence matrix, and $\Dm  \in \Rc^{l \times l}$ be the diagonal branch admittance matrix. 
Hence for all $k \in \{1, \ldots, l\}$, the matrices $\Am$ and $\Dm$ are given by 
\begin{IEEEeqnarray}{c}
A_{ki} = \left\{ 
\begin{tabular}{cl}
 1, & \textnormal{if branch $\mathsf{e}_k$ starts from bus $i$},\\
 -1,& \textnormal{if branch $\mathsf{e}_k$ ends at bus $i$},\\
0, & \textnormal{otherwise },
\end{tabular}
\right.
\end{IEEEeqnarray}
and 
\begin{IEEEeqnarray}{c}\label{Equ:b_k}
D_{kk} = b_{k}.
\end{IEEEeqnarray}
Without loss of generality, we assume that $b_k \neq 0, \forall  k \in \{1, \ldots, l\}$, which implies that matrix $\Dm$ has full rank. 

For the linearized system model, selecting the net active power injections of the buses and the active power flow in both line directions as measurements and the bus voltage angles as state variables results in the measurement model
\begin{IEEEeqnarray}{l}
\begin{bmatrix}
\pv\\
\fv \\
-\fv
\end{bmatrix} = 
\begin{bmatrix}
\Am^{\sf T} \Dm \Am \\
 \Dm \Am \\
 - \Dm \Am 
\end{bmatrix} \thetav 
+\zv, \label{Equ:JacoPhy}
\end{IEEEeqnarray} 
where $\pv \in \Rc^{n+1}$ is the vector of net active power injections, $\fv \in \Rc^{l}$ is the vector of the active power flows, and $\thetav \in \Rc^{n+1}$ is the vector of bus voltage angles. 

However, the measurement matrix in \eqref{Equ:JacoPhy} is rank-deficient, which implies that for the WLS estimation there exist infinitely many estimates satisfying \eqref{Equ:WLS}.
The following lemma formalizes this notion. 
\begin{lemma}\label{lemma:rank}
For the linear model in \eqref{Equ:JacoPhy}, it holds that 
\begin{IEEEeqnarray}{c}
\textnormal{rank} 
\left( 
\begin{bmatrix}
\Am^{\sf T} \Dm \Am \\
 \Dm \Am \\
 - \Dm \Am 
\end{bmatrix}
\right) = n.\label{Equ:new}
\end{IEEEeqnarray}
\end{lemma}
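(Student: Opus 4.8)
The plan is to show both inequalities: $\textnormal{rank}(\cdot) \le n$ and $\textnormal{rank}(\cdot) \ge n$. The first is structural: the three blocks $\Am^{\sf T}\Dm\Am$, $\Dm\Am$, and $-\Dm\Am$ are all built by multiplying on the right by $\Am$, so the column space of the stacked matrix is contained in the column space generated by $\Am$ acting on $\Rc^{n+1}$. More precisely, the null space of $\Am$ itself contains the all-ones vector $\onev \in \Rc^{n+1}$, since every row of the incidence matrix $\Am$ has exactly one $+1$ and one $-1$ (each branch has a start bus and an end bus) and hence sums to zero. Therefore $\Am\onev = \zerov$, which forces $\Am^{\sf T}\Dm\Am\onev = \zerov$ and $\pm\Dm\Am\onev = \zerov$, so $\onev$ lies in the null space of the full stacked matrix. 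Since the matrix has $n+1$ columns and a nontrivial null space, its rank is at most $n$.

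For the reverse inequality I would argue that the null space is \emph{exactly} the span of $\onev$, i.e. one-dimensional, which gives rank $\ge (n+1) - 1 = n$. Suppose $\thetav \in \Rc^{n+1}$ is annihilated by the stacked matrix; in particular $\Dm\Am\thetav = \zerov$. Because $\Dm$ has full rank (by the standing assumption $b_k \neq 0$ for all $k$), this is equivalent to $\Am\thetav = \zerov$. Now $\Am\thetav = \zerov$ means that for every branch $\mathsf{e}_k$ joining buses $i$ and $j$ we have $\theta_i - \theta_j = 0$, i.e. $\thetav$ is constant on each connected component of the underlying graph. Under the implicit connectedness assumption on the power network (a power system graph is connected, otherwise state estimation decouples trivially), this forces $\thetav = c\,\onev$ for some scalar $c$. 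Hence the null space is precisely $\textnormal{span}(\onev)$, it has dimension $1$, and by rank–nullity the rank equals $n$.

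The main obstacle — really the only subtlety — is making the connectedness hypothesis explicit and justifying that it is the right assumption; the excerpt does not state it formally, so I would either invoke it as a standing modeling assumption on the power grid or, if one wants to be scrupulous, state the conclusion as ``rank $= (n+1) - (\text{number of connected components})$'' and specialize. Everything else is routine linear algebra: the containment argument for the upper bound, the cancellation of $\Dm$ for the lower bound, and the graph-theoretic characterization of $\ker \Am$ (which is the classical fact that the incidence matrix of a connected graph on $n+1$ vertices has rank $n$). No delicate estimates are needed, and the two halves combine immediately to give \eqref{Equ:new}.
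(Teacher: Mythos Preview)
Your argument is correct. The paper takes a slightly different route: it factors the stacked matrix as
\[
\begin{bmatrix}\Am^{\sf T}\\ \Id\\ -\Id\end{bmatrix}\Dm\Am,
\]
observes that the left factor has full column rank $l$ (because it contains an identity block), so the rank of the product equals $\textnormal{rank}(\Dm\Am)=\textnormal{rank}(\Am)$, and then cites an external result that the incidence matrix of a connected graph on $n+1$ vertices has rank $n$. You instead work directly with the null space, exhibiting $\onev$ to get the upper bound and using the $\Dm\Am$ block together with invertibility of $\Dm$ and connectedness to pin the kernel down to $\textnormal{span}(\onev)$. The two arguments are equivalent in substance---both ultimately rest on $\textnormal{rank}(\Am)=n$---but your version is more self-contained (you effectively prove that rank fact inline via the ``constant on connected components'' characterization of $\ker\Am$), while the paper's factorization makes the reduction to $\textnormal{rank}(\Am)$ explicit in one line. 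Your remark about making connectedness an explicit hypothesis is well taken; the paper leaves it implicit and defers to the cited lemma.
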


\begin{proof}
The proof is provided in Appendix \ref{Sec:Proof0}. 
\end{proof}
To guarantee the uniqueness of the estimated states via WLS, i.e. $\Hm$ being a full rank matrix, 
we choose {\it the first bus as the reference node}, which does not decrease the generality of the model. 
As a result, it holds that 
\begin{IEEEeqnarray}{c}
\Hm = 
\begin{bmatrix}
\Am^{\sf T} \Dm \Am \\
 \Dm \Am \\
 - \Dm \Am 
\end{bmatrix}_{\cdot, \{2, \ldots, n+1 \}},  \label{Equ:JacoPhyFR}
\end{IEEEeqnarray}
which results from removing the first column.
Moreover, note that $\Hm$ in \eqref{Equ:JacoPhyFR} is equivalent to 
\begin{IEEEeqnarray}{c}
\Hm = 
\begin{bmatrix}
\Am_{\cdot, \{2, \ldots, n+1 \}}^{\sf T} \Dm\Am_{\cdot, \{2, \ldots, n+1 \}} \\
 \Dm \Am_{\cdot, \{2, \ldots, n+1 \}} \\
 - \Dm \Am_{\cdot, \{2, \ldots, n+1 \}}
\end{bmatrix},   \label{Equ:JacoPhyFR1}
\end{IEEEeqnarray}
which is obtained by removing the first column of $\Am$ \cite[pp. 259]{grainger_power_1994}. 
The following lemma provides an alternative expression for the Jacobian matrix $\Hm$. 
\begin{lemma}\label{lemma:N1}
Let $f: {\cal E} \to \{1, \ldots,n+1 \}$ be the function whose input is a branch and output is the starting bus of that branch, and let $t: {\cal E} \to \{1, \ldots,n+1 \}$ be the function whose output is the ending bus.
The Jacobian matrix $\Hm$ in \eqref{Equ:JacoPhyFR} is given by
\begin{IEEEeqnarray}{c}
\Hm = \left[\sum_{i=1}^l b_i \Hm_i \right]_{\cdot, \{2, \ldots, n+1 \}}, 
\end{IEEEeqnarray}
where $\Hm_i \in \Rc^{m \times (n+1)}$ is given by 
\begin{IEEEeqnarray}{c}
 \Hm_i = 
 \begin{blockarray}{cccccc}
   \dots & f(\mathsf{e}_i) & \dots & t(\mathsf{e}_i) & \dots& \\
 \begin{block}{[ccccc]c}
   \cdots &\cdots &\cdots & \cdots & \cdots & \vdots \\
  0 &1 &\cdots &-1 & 0 &p_{f(\mathsf{e}_i)} \\
   \vdots & \vdots & \ddots & \vdots &  \vdots &\vdots \\
  0&-1&\cdots &1 &0 &p_{t(\mathsf{e}_i)} \\
   \vdots&\vdots&\vdots &\vdots &\vdots& \vdots \\
  \cdashline{1-5}
     \vdots&\vdots&\vdots &\vdots &\vdots & \vdots \\
    0 &1 &\cdots &-1 & 0 & f_{i} \\
      \vdots&\vdots&\vdots &\vdots &\vdots &  \vdots\\
     \cdashline{1-5}
       \vdots&\vdots&\vdots &\vdots &\vdots & \vdots \\
       0 &-1 &\cdots &1 & 0 & -f_{i} \\
            \vdots&\vdots&\vdots &\vdots &\vdots & \vdots \\
 \end{block}
 \end{blockarray}
 \IEEEeqnarraynumspace \label{Equ:HmK}
\end{IEEEeqnarray}
\end{lemma}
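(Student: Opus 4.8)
The plan is to decompose the diagonal branch admittance matrix $\Dm$ into rank-one pieces and push that decomposition through the two matrix products appearing in \eqref{Equ:JacoPhyFR1}. Let $\ev_i \in \Rc^{l}$ denote the $i$-th canonical basis vector. Since $D_{kk} = b_k$ by \eqref{Equ:b_k} and $\Dm$ is diagonal, we may write $\Dm = \sum_{i=1}^{l} b_i \ev_i \ev_i^{\sf T}$. Substituting this identity gives $\Dm \Am = \sum_{i=1}^{l} b_i \ev_i (\ev_i^{\sf T}\Am) = \sum_{i=1}^{l} b_i \ev_i \Am_{i,\cdot}$, i.e. a sum of matrices each supported on a single row, and likewise $\Am^{\sf T}\Dm\Am = \sum_{i=1}^{l} b_i (\Am^{\sf T}\ev_i)(\ev_i^{\sf T}\Am) = \sum_{i=1}^{l} b_i \Am_{i,\cdot}^{\sf T}\Am_{i,\cdot}$. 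Stacking the three blocks of \eqref{Equ:JacoPhy} and using linearity of the stacking map then yields
\[
\begin{bmatrix} \Am^{\sf T}\Dm\Am \\ \Dm\Am \\ -\Dm\Am \end{bmatrix}
= \sum_{i=1}^{l} b_i \begin{bmatrix} \Am_{i,\cdot}^{\sf T}\Am_{i,\cdot} \\ \ev_i \Am_{i,\cdot} \\ -\ev_i \Am_{i,\cdot} \end{bmatrix}.
\]

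Next I would check that the $i$-th summand above is precisely the matrix $\Hm_i$ of \eqref{Equ:HmK}. By the definition of the branch-bus incidence matrix, the row $\Am_{i,\cdot}$ has exactly two nonzero entries: $+1$ in column $f(\mathsf{e}_i)$ and $-1$ in column $t(\mathsf{e}_i)$. Hence $\ev_i \Am_{i,\cdot}$ is the matrix whose $i$-th row equals $\Am_{i,\cdot}$ and whose remaining rows vanish, which matches the middle block of \eqref{Equ:HmK}, and $-\ev_i \Am_{i,\cdot}$ is its negative, matching the bottom block. For the top block, $\Am_{i,\cdot}^{\sf T}\Am_{i,\cdot}$ is the $(n+1)\times(n+1)$ matrix with $+1$ in entries $(f(\mathsf{e}_i),f(\mathsf{e}_i))$ and $(t(\mathsf{e}_i),t(\mathsf{e}_i))$, with $-1$ in entries $(f(\mathsf{e}_i),t(\mathsf{e}_i))$ and $(t(\mathsf{e}_i),f(\mathsf{e}_i))$, and zero elsewhere; reading this off row by row against the measurement order $(\pv,\fv,-\fv)$ reproduces exactly the rows displayed in \eqref{Equ:HmK}. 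This step is pure index bookkeeping.

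Finally, I would restrict both sides of the displayed identity to the columns indexed by $\{2,\ldots,n+1\}$. Since column selection is linear, $\bigl[\sum_{i=1}^{l} b_i \Hm_i\bigr]_{\cdot,\{2,\ldots,n+1\}} = \sum_{i=1}^{l} b_i [\Hm_i]_{\cdot,\{2,\ldots,n+1\}}$, and the left-hand side equals $\Hm$ by \eqref{Equ:JacoPhyFR}, which is the claimed expression. The only point requiring genuine care is the orientation convention in the middle step: one must keep $f(\mathsf{e}_i)$ the start bus and $t(\mathsf{e}_i)$ the end bus, with $A_{ki}=1$ at the start and $-1$ at the end, consistent when expanding $\Am_{i,\cdot}^{\sf T}\Am_{i,\cdot}$, since a sign slip there would flip the off-diagonal $\pm 1$ pattern in the top block of $\Hm_i$; apart from this the whole argument is immediate from the rank-one decomposition of $\Dm$.
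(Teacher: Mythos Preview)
Your argument is correct and is essentially identical to the paper's own proof: the paper writes $\Dm=\sum_{i=1}^l b_i \Em_{ii}$ with $\Em_{ii}$ the single-entry diagonal matrix (your $\ev_i\ev_i^{\sf T}$), pushes this through $\Dm\Am$ and $\Am^{\sf T}\Dm\Am$, identifies the resulting blocks with $\Hm_i$ via the two nonzero entries of $\Am_{i,\cdot}$, and then restricts to columns $\{2,\ldots,n+1\}$. The only cosmetic difference is that you phrase the decomposition as rank-one outer products rather than single-entry diagonal matrices.
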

\begin{proof}
The proof is provided in Appendix \ref{Sec:Proof1}. 
\end{proof}

Although only the case in which the branches with real admittances is considered, i.e. $\Dm  \in \Rc^{l \times l}$, the results in this subsection can be readily extended to the case of complex admittances, see \cite{kettner_2017_properties}.

\subsection{Data Injection Attacks}

DIA is a kind of data integrity attack that corrupts the measurements acquired by the SCADA system, aiming to mislead the operator. 
Specifically, the measurement model under a DIA is described by 
\begin{IEEEeqnarray}{c}
\label{eq:attack_obs_mod}
\yv_a = \Hm \xv + \zv +\av, 
\end{IEEEeqnarray}
where $\yv_a \in \Rc^{m} $ is a vector of measurements that are under attack and $\av \in \Rc^{m}$ is a vector that represents the attack. 
Note that the objective of the attacker is two-fold: corrupting the measurements to disrupt the state estimation and to bypass the BDD stealthily. 
To that end, an attack that bypasses the BDD and corrupts the estimated states is defined as a {\it stealth attack}.

\begin{definition}\label{Def:S}
An attack vector $\av$ is a stealth attack under the Jacobian matrix $\Hm$ when the following conditions are simultaneously satisfied:
\begin{itemize}
\item $\Km \yv \neq \Km \yv_a$
\item $r(\yv) = r(\yv_a)$,
\end{itemize}
where $\Km$ and $r$ are defined in \eqref{Equ:WLS1} and \eqref{Equ:Res}, respectively. 
\end{definition}

It is proved in \cite{liu_false_2011} that attacks that satisfy the constraint $\av =\Hm \cv$ with $\cv \in \Rc^{n}$ and $\cv \neq \zerov$ do not change the residual test in \eqref{Equ:ResTest}, and therefore, can arbitrarily deviate the estimated states. 
The following theorem states the result.

\begin{theorem}{\textnormal{\cite[Theorem 3.1]{liu_false_2011}}}\label{Theorem:Liu}
Let $\Hm$ be the Jacobian matrix for the observation model in \eqref{eq:attack_obs_mod}. 
Then the attack vector given by 
\begin{IEEEeqnarray}{c}
\av = \Hm \cv \label{Equ:LiuAttack}
\end{IEEEeqnarray}
does not change the residual and is a stealth attack for any vector $\cv \in \{ \Rc^{n}\setminus \zerov\}$.
\end{theorem}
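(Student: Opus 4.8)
The plan is to verify directly that the attack vector $\av = \Hm \cv$ satisfies both conditions in Definition~\ref{Def:S}. Since this is Theorem~3.1 of \cite{liu_false_2011} restated in the present notation, the proof is short and the only subtlety is checking that the first condition can be met, which is why the hypothesis $\cv \neq \zerov$ is imposed together with $\Hm$ being full rank (as arranged in \eqref{Equ:JacoPhyFR}).

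First I would address the residual condition. Substituting $\yv_a = \yv + \av = \yv + \Hm\cv$ into \eqref{Equ:Res} and using $\hat{\xv}_a = \Km \yv_a$ with $\Km = (\Hm^{\sf T}\Wm\Hm)^{-1}\Hm^{\sf T}\Wm$ from \eqref{Equ:WLS1}, one computes
\begin{IEEEeqnarray}{rl}
r(\yv_a) &= \| \yv_a - \Hm\Km\yv_a\|_2^2 = \|(\Id - \Hm\Km)(\yv + \Hm\cv)\|_2^2. \nonumber
\end{IEEEeqnarray}
The key algebraic fact is that $(\Id - \Hm\Km)\Hm = \Hm - \Hm(\Hm^{\sf T}\Wm\Hm)^{-1}\Hm^{\sf T}\Wm\Hm = \Hm - \Hm = \zerov$, so the $\Hm\cv$ term is annihilated and $r(\yv_a) = \|(\Id - \Hm\Km)\yv\|_2^2 = r(\yv)$. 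This gives the second bullet of Definition~\ref{Def:S}.

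Next I would address the state-deviation condition. From \eqref{Equ:WLS1}, $\Km\yv_a - \Km\yv = \Km\Hm\cv = (\Hm^{\sf T}\Wm\Hm)^{-1}\Hm^{\sf T}\Wm\Hm\cv = \cv$. Hence $\Km\yv_a = \Km\yv + \cv$, and since $\cv \neq \zerov$ we get $\Km\yv_a \neq \Km\yv$, which is the first bullet. Note this step uses invertibility of $\Hm^{\sf T}\Wm\Hm$, i.e.\ that $\Hm$ has full column rank $n$; this is exactly the reason the reference bus was removed in \eqref{Equ:JacoPhyFR} and why Lemma~\ref{lemma:rank} matters — without it, $\Km\Hm$ is merely a projection and $\cv$ in the null space would produce no state change.

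I do not anticipate a serious obstacle here, as the argument is elementary linear algebra built on the hat-matrix identity $(\Id - \Hm\Km)\Hm = \zerov$. The only point requiring care is to make explicit that the two conditions are genuinely compatible: the residual is unchanged for \emph{every} $\cv$, while the state changes precisely when $\cv \neq \zerov$, so the full-rank assumption on $\Hm$ is what makes the set of stealth attacks of the form \eqref{Equ:LiuAttack} nontrivial. One could also remark that, geometrically, adding $\Hm\cv$ translates the measurement within the column space of $\Hm$, which is orthogonal (in the $\Wm$-weighted inner product) to the residual subspace, leaving the projection residual invariant.
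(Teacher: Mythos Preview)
Your proof is correct and is the standard argument for this classical result. Note, however, that the paper does not actually supply its own proof of this theorem: it is stated with a citation to \cite[Theorem~3.1]{liu_false_2011} and used as a black box in the subsequent proofs (e.g., in deriving \eqref{Equ:Key3} and \eqref{Equ:MBKey1}). So there is nothing in the paper to compare against; your derivation via the hat-matrix identity $(\Id - \Hm\Km)\Hm = \zerov$ and the observation $\Km\Hm\cv = \cv$ is exactly the expected elementary verification, and your remark on the role of the full-rank assumption from \eqref{Equ:JacoPhyFR} is apt.
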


\subsection{Moving Target Defense}
MTD provides a cyberattack defense strategy whereby the operator of the system willingly changes the system information to create a mismatch between the real system information and the knowledge that the attacker possesses about it. 
In doing so, the operator increases the uncertainty and increases the complexity of the stealth attack construction for the attacker \cite{zhuang_2014_towards}.
Specifically, for DIAs, the operator of the system can alter the impedances (or equivalently admittances) of branches via FACTS devices \cite{rogers_2008_some}, which induces information mismatch in the Jacobian matrix. 
{\it Here we assume that the attacker is unable to detect or observe changes in the Jacobian matrix, implying that the attacker has access only to knowledge about the Jacobian matrix before the implementation of the mismatch induced by the MTD strategy.}

Let ${\cal E_{MTD}} \subset {\cal E}$ be the set of branches that are equipped with FACTS devices, and $\Dm' \in \Rc^{l \times l} $ be the diagonal branch admittance matrix of the system protected by MTD.
Then, for all $i \in \{1, \ldots, l \}$, the matrix $\Dm'$ satisfies 
\begin{IEEEeqnarray}{c}\label{Equ:DBAMTD}
D_{ii}' = \left\{ 
\begin{tabular}{cc}
$b_i$, & \textnormal{if $\mathsf{e}_i \not\in {\cal E_{MTD}}$} ,\\
$b_i + \Delta b_i $, & \textnormal{if $\mathsf{e}_i \in {\cal E_{MTD}}$}, \\
\end{tabular}
\right. 
\end{IEEEeqnarray}
where $\Delta b_i  \in \Rc $ is the change in admittance of branch $\mathsf{e}_i$ 
and the Jacobian matrix of the  system protected by MTD is given by 
\begin{IEEEeqnarray}{c}\label{Equ:Jac_MTD}
\Hm' = 
\begin{bmatrix}
\Am^{\sf T} \Dm' \Am \\
 \Dm' \Am \\
 - \Dm' \Am 
\end{bmatrix}_{\cdot, \{2, \ldots, n+1 \}}.
\end{IEEEeqnarray}
We extend the stealth attack construction in Definition \ref{Def:S}, by also adding the mismatch information requirement that guarantees that the attack can bypass the BDD in a stealthy manner and corrupt the estimated states. 
We refer to these attacks as {\it MTD-resilient stealth attacks}.
\begin{definition}\label{Def:SMTD}
An attack vector $\av$ is an MTD-resilient stealth attack under a given MTD strategy (or equivalently the corresponding Jacobian matrix $\Hm'$) when the following conditions are simultaneously satisfied:
\begin{itemize}
\item $\Km' \yv \neq \Km' \yv_a$
\item $r'(\yv) = r'(\yv_a)$,
\end{itemize}
where $\Km'$ and $r'$ are defined as 
\begin{IEEEeqnarray}{rll}
\Km' &\triangleq  \left( \Hm'^{\sf T} \Wm \Hm' \right)^{-\!1}\Hm'^{\sf T}\Wm, &\quad \textnormal{and} \label{Equ:WLSN}
\\
r'(\yv) &\triangleq  \| \yv - \Hm' \hat{\xv} \|_{2}^2 = \| \yv - \Hm' \Km' \yv \|_{2}^2. &\label{Equ:ResN}
\end{IEEEeqnarray}
\end{definition}

\section{Single Branch Moving Target Defense}\label{Sec:SBMTD}
In this section, we first consider a special case of the MTD for which the operator can only change the admittance of a single branch, i.e. $|{\cal E_{MTD}}|=1$. 
This case is referred to as {\it Single Branch MTD}. 

It is intuitive to assume that the efficacy of the MTD is determined by the amount of branches whose admittance can be changed and by the magnitude of the admittance change, i.e. by $|{\cal E_{MTD}}|$ and by $\Delta b_i$ with $i \in \{1, \ldots,l \}$.
When the operator is limited to changing the admittance of a single branch, it is not straightforward to assume that {\it MTD-resilient stealth attacks} can be constructed without information about $\Delta b_i$. 
However, the following theorem proves that such attacks can indeed be constructed, and moreover, the implementation does not require knowledge about $\Delta b_i$. 

Note that there is no branch starting and ending at the same bus in the power system, hence there are only two cases that need to be considered: either the protected branch connects the reference bus (i.e. bus with index one) or it does not connect the reference bus. 

\begin{theorem}\label{Theorem:2}
Assume that ${\cal E_{MTD}} = \{ \mathsf{e}_k\}$ with $k \in \{1, \ldots,l \}$.
If $f(\mathsf{e}_k) \neq 1$ and $t(\mathsf{e}_k) \neq 1$, then for all vectors $\cv \in \{ \Rc^{n}{\setminus}\zerov\}$ such that 
\begin{IEEEeqnarray}{c}
c_{f(\mathsf{e}_k)-1} = c_{t(\mathsf{e}_k)-1} , \label{Equ:SBMTDCon1} \IEEEeqnarraynumspace
\end{IEEEeqnarray}
the attack vectors satisfying 
\begin{IEEEeqnarray}{c}\label{Equ:SBMTD}
\av = \Hm \cv 
\end{IEEEeqnarray}
are MTD-resilient stealth attacks under the given Single Branch MTD strategy, where function $f$ and $t$ are defined in Lemma \ref{lemma:N1}.  

If $f(\mathsf{e}_k) = 1$ or $t(\mathsf{e}_k) = 1$, then for all vectors $\cv \in \{ \Rc^{n}{\setminus}\zerov\}$ such that 
\begin{subequations}\label{Equ:SBMTDCon2}
\begin{IEEEeqnarray}{c}
c_{f(\mathsf{e}_k)-1} = 0, \quad \textnormal{ if} \ t(\mathsf{e}_k) \ = 1, \\
c_{t(\mathsf{e}_k)-1}  = 0, \quad \textnormal{ if} \ f(\mathsf{e}_k) = 1, 
\end{IEEEeqnarray}
\end{subequations}
the attacks satisfying \eqref{Equ:SBMTD} are MTD-resilient stealth attacks under the given Single Branch MTD strategy. 
\end{theorem}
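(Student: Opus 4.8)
The plan is to show that an attack of the form $\av = \Hm\cv$ is an MTD-resilient stealth attack by verifying both bullet conditions of Definition~\ref{Def:SMTD} directly, exploiting the additive decomposition $\Hm = \left[\sum_{i=1}^{l} b_i \Hm_i\right]_{\cdot,\{2,\ldots,n+1\}}$ from Lemma~\ref{lemma:N1}. The central observation is that the MTD-perturbed Jacobian can be written as $\Hm' = \Hm + \Delta b_k \left[\Hm_k\right]_{\cdot,\{2,\ldots,n+1\}}$, since only branch $\mathsf{e}_k$ is protected; call the correction term $\Gm_k \triangleq \left[\Hm_k\right]_{\cdot,\{2,\ldots,n+1\}}$. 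If I can show that under the stated condition on $\cv$ we have $\Gm_k\cv = \zerov$, then $\Hm'\cv = \Hm\cv = \av$, and the attack is simply of Liu-type~\eqref{Equ:LiuAttack} but for the matrix $\Hm'$; Theorem~\ref{Theorem:Liu} applied to $\Hm'$ then immediately yields $r'(\yv) = r'(\yv_a)$ (the residual is invariant) and $\Km'\yv_a = \Km'\yv + \cv \neq \Km'\yv$ (the state estimate moves, since $\cv \neq \zerov$ and $\Hm'$ has full column rank). So the whole theorem reduces to the algebraic claim $\Gm_k\cv = \zerov$.

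To prove $\Gm_k\cv=\zerov$, I would read off the rows of $\Hm_k$ from the block display~\eqref{Equ:HmK}. The only nonzero columns of $\Hm_k$ are $f(\mathsf{e}_k)$ and $t(\mathsf{e}_k)$ (with entries $+1$ and $-1$, up to sign depending on the row block), so after deleting the first column to form $\Gm_k$, every nonzero row of $\Gm_k$ acts on $\cv$ as $\pm(c_{f(\mathsf{e}_k)-1} - c_{t(\mathsf{e}_k)-1})$ — provided both $f(\mathsf{e}_k)\neq 1$ and $t(\mathsf{e}_k)\neq 1$, so that neither of the two nonzero columns is the deleted first column. This is exactly why condition~\eqref{Equ:SBMTDCon1}, $c_{f(\mathsf{e}_k)-1}=c_{t(\mathsf{e}_k)-1}$, forces $\Gm_k\cv=\zerov$. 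For the boundary case where one endpoint is the reference bus, say $t(\mathsf{e}_k)=1$: then the $t(\mathsf{e}_k)$-column is deleted, and each nonzero row of $\Gm_k$ contributes $\pm c_{f(\mathsf{e}_k)-1}$ to $\Gm_k\cv$, so we instead need $c_{f(\mathsf{e}_k)-1}=0$, which is precisely~\eqref{Equ:SBMTDCon2}; the case $f(\mathsf{e}_k)=1$ is symmetric. I should also handle the net-injection rows of $\Hm_k$ — the first block in~\eqref{Equ:HmK} has two nonzero rows, indexed by $f(\mathsf{e}_k)$ and $t(\mathsf{e}_k)$, but their nonzero column pattern is the same $\{f(\mathsf{e}_k),t(\mathsf{e}_k)\}$ with entries $(+1,-1)$ and $(-1,+1)$, so the same cancellation applies.

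The main obstacle is bookkeeping rather than conceptual: I must be careful about the index shift induced by deleting the first column (state component $c_j$ corresponds to bus $j+1$), about which rows of $\Hm_k$ are actually nonzero, and about verifying that $\Hm'$ genuinely has full column rank so that $\Km'\Hm' = \Id$ and the two Definition~\ref{Def:SMTD} conditions behave as expected — this last point follows from Lemma~\ref{lemma:rank} and the reference-bus choice applied to the admittance matrix $\Dm'$, which still has no zero diagonal entries as long as $b_k + \Delta b_k \neq 0$ (a genuine change must be assumed nonzero and not equal to $-b_k$; I would note this mild caveat). One subtlety worth double-checking is that $\av = \Hm\cv$ is asserted with $\Hm$, the \emph{pre-MTD} Jacobian, matching the stated attacker knowledge; the argument above shows $\Hm\cv = \Hm'\cv$ under the condition on $\cv$, so the attacker, knowing only $\Hm$ and which branch is protected, can construct the attack without knowing $\Delta b_k$ — which is the point of the theorem.
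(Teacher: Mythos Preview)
Your proposal is correct and follows essentially the same approach as the paper: write $\Hm' = \Hm + \Delta b_k [\Hm_k]_{\cdot,\{2,\ldots,n+1\}}$ via Lemma~\ref{lemma:N1}, verify that the stated condition on $\cv$ forces $[\Hm_k]_{\cdot,\{2,\ldots,n+1\}}\cv = \zerov$ in each case, conclude $\Hm\cv = \Hm'\cv$, and then invoke Theorem~\ref{Theorem:Liu} for $\Hm'$ to obtain both conditions of Definition~\ref{Def:SMTD}. Your treatment is in fact slightly more explicit than the paper's in handling the index shift and in noting the mild caveat that $\Dm'$ must remain full rank.
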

\begin{proof}
The proof is provided in Appendix \ref{Sec:Proof3}.
\end{proof}

Theorem \ref{Theorem:2} establishes a sufficient condition for attacks to be MTD-resilient stealth attacks under the Single Branch MTD. 
{\it Importantly, the attack in Theorem \ref{Theorem:2} relies on the Jacobian matrix before MTD, i.e. on $\Hm$, rather than the Jacobian matrix after the implementation of the MTD.}
In view of this, if the attacker knows which branch is protected by the Single Branch MTD strategy, the attacker can construct MTD-resilient stealth attacks. 
{\it It is worth mentioning that the result in Theorem \ref{Theorem:2} holds irrespective of the value of $\Delta b_k$}, which implies the admittance value that is changed by the FACTS device does not impact the way of constructing MTD-resilient stealth attacks. 
This suggests that the operator needs to secure the information about location of the MTD protected branch. 

The following lemma shows that the stealthiness of the attacks in Theorem \ref{Theorem:2} relies on the fact that the attack does not change the power flow on the single branch protected by MTD. 
\begin{lemma}\label{lemma2}
Assume that ${\cal E_{MTD}} = \{ \mathsf{e}_k\}$ with $k \in \{1, \ldots,l \}$ and that the measurements are sorted as in \eqref{Equ:JacoPhy}. 
The attack $\av$ constructed via Theorem \ref{Theorem:2} does not alter the power flow on $\mathsf{e}_k$, i.e. 
\begin{IEEEeqnarray}{c}
a_{n+1+k} = 0 \quad \textnormal{and} \quad a_{n+1+l+k} = 0,
\end{IEEEeqnarray}
where $a_{n+1+k}$ and  $a_{n+1+l+k}$ are the attack injection to the power flow on $ \mathsf{e}_k$.
\end{lemma}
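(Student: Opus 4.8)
The plan is to use the decomposition of the Jacobian matrix from Lemma~\ref{lemma:N1} together with the explicit structure of the attack $\av = \Hm\cv$. Recall that with the measurement ordering in \eqref{Equ:JacoPhy}, the rows of $\Hm$ indexed by $n+1+1, \ldots, n+1+l$ correspond to the forward power flows $\fv = \Dm\Am_{\cdot,\{2,\ldots,n+1\}}\thetav$, and the rows indexed by $n+1+l+1,\ldots,n+1+2l$ correspond to $-\fv$. Hence the $k$-th such block gives $a_{n+1+k} = b_k (\Am_{\cdot,\{2,\ldots,n+1\}})_{k,\cdot}\,\cv$ and $a_{n+1+l+k} = -a_{n+1+k}$, so it suffices to show $a_{n+1+k}=0$.

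First I would write out $(\Am_{\cdot,\{2,\ldots,n+1\}})_{k,\cdot}\,\cv$ explicitly. The row $\Am_{k,\cdot}$ of the full branch-bus incidence matrix has a $+1$ in column $f(\mathsf{e}_k)$, a $-1$ in column $t(\mathsf{e}_k)$, and zeros elsewhere. After deleting the first column and re-indexing, the inner product with $\cv$ becomes $c_{f(\mathsf{e}_k)-1} - c_{t(\mathsf{e}_k)-1}$, with the convention that a term is simply absent (equivalently, treated as $0$) when the corresponding bus index equals $1$, since that column was removed. Then I would plug in the hypotheses of Theorem~\ref{Theorem:2}: in the case $f(\mathsf{e}_k)\neq 1$ and $t(\mathsf{e}_k)\neq 1$, condition \eqref{Equ:SBMTDCon1} states $c_{f(\mathsf{e}_k)-1} = c_{t(\mathsf{e}_k)-1}$, so the difference vanishes; in the case $t(\mathsf{e}_k)=1$, the $t$-term is absent and \eqref{Equ:SBMTDCon2} forces $c_{f(\mathsf{e}_k)-1}=0$, so the whole expression vanishes, and symmetrically when $f(\mathsf{e}_k)=1$. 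In every case $a_{n+1+k}=b_k\cdot 0 = 0$, and therefore $a_{n+1+l+k}=-a_{n+1+k}=0$ as well.

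The only genuine subtlety — and the step I would be most careful about — is the bookkeeping of the column deletion and the attendant re-indexing: one must consistently match ``bus index $i$'' in $\Am$ with ``entry $i-1$'' in $\cv$, and correctly handle the degenerate situation where an endpoint of $\mathsf{e}_k$ is the reference bus, in which case the corresponding entry of $\cv$ does not exist and the condition in \eqref{Equ:SBMTDCon2} is precisely what compensates. This is not a deep obstacle, just a place where an off-by-one slip would break the argument; invoking Lemma~\ref{lemma:N1} (via the matrix $\Hm_k$ in \eqref{Equ:HmK}, whose power-flow rows already display the pattern $0,1,\ldots,-1,0$ with the $1$ in position $f(\mathsf{e}_k)$ and $-1$ in position $t(\mathsf{e}_k)$) makes the accounting transparent. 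With $a_{n+1+k}=a_{n+1+l+k}=0$ established, the claim is proved.
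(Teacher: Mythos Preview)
Your proof is correct and follows essentially the same route as the paper: both arguments read off the $(n\!+\!1\!+\!k)$-th row of $\Hm$ from the block $\Dm\Am_{\cdot,\{2,\ldots,n+1\}}$, identify its only nonzero entries at positions $f(\mathsf{e}_k)-1$ and $t(\mathsf{e}_k)-1$ (or a single nonzero when an endpoint is the reference bus), and then invoke \eqref{Equ:SBMTDCon1} or \eqref{Equ:SBMTDCon2} to conclude the inner product with $\cv$ vanishes, finishing with $a_{n+1+l+k}=-a_{n+1+k}$. Your write-up is in fact slightly tidier in carrying the scalar $b_k$ explicitly and in handling the reference-bus column deletion, but the substance is identical.
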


\begin{proof}
The proof is provided in Appendix \ref{Sec:Proof4}.
\end{proof}

The intuition behind Lemma \ref{lemma2} is outlined in the following. 
If the attacker is aware of which branch is protected by the Single Branch MTD strategy, 
the attacker can refrain from injecting any error into the measurement from that branch. 
Adding the fact that the attacker knows the admittance of the rest of the branches, the attacker can inject error into the measurements on these branches without triggering the BDD. 

\section{Multiple Branch Moving Target Defense and Countermeasure}\label{Sec:MBMTD}
\subsection{Multiple Branch Moving Target Defense}
In this section, we extend the results in Section \ref{Sec:SBMTD} to the Multiple Branch MTD case, that is, the case for which it holds that $|{\cal E_{MTD}}|>1$.
For the Multiple Branch MTD case, the operator can change the admittance of more than one branch, which suggests that it will be, in principle, more effective against DIAs. 
However, the following theorem shows that MTD-resilient stealth attacks are possible and describes the conditions for existence. 
As in the Single Branch MTD case, there are only two cases that need consideration: no protected branch connects the reference bus, and there exists a protected branch connecting the reference bus. 
The theorem below describes the existence for both cases.

\begin{theorem}\label{Theorem:3}
Assume that 
\begin{IEEEeqnarray}{c}\label{Equ:MBMTDCon0}
 \forall \mathsf{e}  \in {\cal E_{MTD}},  \quad f(\mathsf{e}) \neq 1 \ \textnormal{and} \ t(\mathsf{e}) \neq 1.
\end{IEEEeqnarray}
Then for all vectors $\cv \in \{ \Rc^{n}{\setminus}\zerov\}$ such that 
\begin{IEEEeqnarray}{c}
 \forall \mathsf{e}  \in {\cal E_{MTD}},  \quad c_{f(\mathsf{e})-1} = c_{t(\mathsf{e})-1},  \label{Equ:MBMTDCon1} 
\end{IEEEeqnarray}
the attacks satisfying 
\begin{IEEEeqnarray}{c}\label{Equ:MBMTD}
\av = \Hm \cv 
\end{IEEEeqnarray}
are MTD-resilient stealth attacks under the given Multiple Branch MTD strategy, where function $f$ and $t$ are defined in Lemma \ref{lemma:N1}. 

Assume that 
\begin{IEEEeqnarray}{c}\label{Equ:MBMTDCon3N}
\exists \mathsf{e}  \in {\cal E_{MTD}}, \quad f(\mathsf{e}) = 1 \ \textnormal{or} \ t(\mathsf{e}) = 1. 
\end{IEEEeqnarray}
Then for all vectors $\cv \in \{ \Rc^{n}{\setminus}\zerov\}$ such that  $ \forall \mathsf{e}  \in {\cal E_{MTD}}$,
\begin{subequations}\label{Equ:MBMTDCon2}
\begin{IEEEeqnarray}{rl}
c_{f(\mathsf{e})-1} = 0, \quad &\textnormal{ if} \ t(\mathsf{e}) \ = 1, \label{Equ:MBMTDCon21} \\
c_{t(\mathsf{e})-1} \ = 0, \quad &\textnormal{ if} \ f(\mathsf{e}) = 1,\label{Equ:MBMTDCon22} \\
c_{f(\mathsf{e})-1} = c_{t(\mathsf{e})-1},   \quad &\textnormal{ if} \ f(\mathsf{e}) \ \neq 1 \textnormal{ and }\ t(\mathsf{e}) \ \neq 1, \label{Equ:MBMTDCon23} \IEEEeqnarraynumspace
\end{IEEEeqnarray}
\end{subequations}
the attacks satisfying \eqref{Equ:MBMTD} are MTD resislent stealth attacks under the given Multiple Branch MTD strategy. 

\end{theorem}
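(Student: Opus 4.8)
The plan is to reduce the statement to Theorem~\ref{Theorem:Liu} applied to the post--MTD Jacobian $\Hm'$. Specifically, I would first observe that it suffices to prove the identity
\begin{IEEEeqnarray}{c}
\Hm' \cv = \Hm \cv \nonumber
\end{IEEEeqnarray}
for every $\cv$ satisfying the hypotheses of the theorem. Indeed, once this holds, the attack $\av = \Hm\cv$ of \eqref{Equ:MBMTD} coincides with $\Hm'\cv$, and Theorem~\ref{Theorem:Liu}, instantiated with the full-rank matrix $\Hm'$ in place of $\Hm$, immediately yields $\Km'\yv \neq \Km'\yv_a$ and $r'(\yv) = r'(\yv_a)$, which are exactly the two requirements of Definition~\ref{Def:SMTD}. (That $\Hm'$ is full rank follows by the same argument used for $\Hm$, since $\Dm'$ in \eqref{Equ:DBAMTD} still has only nonzero diagonal entries.) Thus the whole theorem comes down to the linear identity $\Hm'\cv = \Hm\cv$.

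To establish that identity I would invoke the additive decomposition of Lemma~\ref{lemma:N1}. The matrices $\Hm_i$ in \eqref{Equ:HmK} depend only on the incidence structure of branch $\mathsf{e}_i$ and on the ordering of the measurements, not on the admittance $b_i$; hence the same decomposition, with $D'_{ii}$ in place of $b_i$, represents $\Hm'$ in \eqref{Equ:Jac_MTD}. Subtracting the two representations gives
\begin{IEEEeqnarray}{c}
\Hm' - \Hm = \Bigl[\,\textstyle\sum_{i:\,\mathsf{e}_i \in {\cal E_{MTD}}} \Delta b_i\, \Hm_i \Bigr]_{\cdot, \{2, \ldots, n+1\}}, \nonumber
\end{IEEEeqnarray}
so it is enough to show that $[\Hm_i]_{\cdot, \{2, \ldots, n+1\}}\,\cv = \zerov$ for every protected branch $\mathsf{e}_i \in {\cal E_{MTD}}$.

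The core of the argument is then a short case analysis on each protected branch, using the explicit form of $\Hm_i$. After deleting the first column (the reference bus), the only possibly nonzero columns of $\Hm_i$ are those indexed $f(\mathsf{e}_i)-1$ and $t(\mathsf{e}_i)-1$, whichever survive, with entries $\pm 1$ as displayed in \eqref{Equ:HmK}. Consequently every nonzero row of $[\Hm_i]_{\cdot, \{2, \ldots, n+1\}}\,\cv$ equals $\pm(c_{f(\mathsf{e}_i)-1} - c_{t(\mathsf{e}_i)-1})$ when $f(\mathsf{e}_i) \neq 1$ and $t(\mathsf{e}_i) \neq 1$; equals $\mp\, c_{t(\mathsf{e}_i)-1}$ when $f(\mathsf{e}_i) = 1$ (whence $t(\mathsf{e}_i) \neq 1$, since there are no self-loops); and equals $\pm\, c_{f(\mathsf{e}_i)-1}$ when $t(\mathsf{e}_i) = 1$. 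These three cases are exhaustive, and conditions \eqref{Equ:MBMTDCon2} are precisely what annihilates each of the three expressions, so $[\Hm_i]_{\cdot, \{2, \ldots, n+1\}}\,\cv = \zerov$ for all protected branches and hence $\Hm'\cv = \Hm\cv$. The first half of the theorem, under hypothesis \eqref{Equ:MBMTDCon0}, is the special case in which only the first of these three cases ever arises, so the same computation with condition \eqref{Equ:MBMTDCon1} covers it.

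I expect the only delicate point to be the bookkeeping of the reference-bus column deletion: one must track carefully that when $f(\mathsf{e}_i) = 1$ or $t(\mathsf{e}_i) = 1$ the corresponding column of $\Hm_i$ disappears, which is exactly why the matching requirement in \eqref{Equ:MBMTDCon2} degenerates from an equality between two coordinates of $\cv$ to the vanishing of a single coordinate. Beyond that, the proof is routine linear algebra once Lemma~\ref{lemma:N1} and Theorem~\ref{Theorem:Liu} are in hand.
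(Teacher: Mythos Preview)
Your proposal is correct and follows essentially the same route as the paper: decompose $\Hm' - \Hm$ via Lemma~\ref{lemma:N1}, verify $[\Hm_i]_{\cdot,\{2,\ldots,n+1\}}\cv = \zerov$ for each protected branch by the same case analysis on whether the reference bus is incident, and then invoke Theorem~\ref{Theorem:Liu} for $\Hm'$. The only cosmetic difference is that the paper computes $\Km'\yv_a = \Km'\yv + \cv$ explicitly rather than reading both stealth conditions directly off Theorem~\ref{Theorem:Liu} as you do.
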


\begin{proof}
The proof is provided in Appendix \ref{Sec:Proof5}.
\end{proof}

Theorem \ref{Theorem:3} provides a sufficient condition for the attacks to be MTD-resilient stealth attacks under the Multiple Branch MTD. 
As in Theorem \ref{Theorem:2},  {\it the result in Theorem \ref{Theorem:3} holds for the Jacobian matrix before MTD and for any value of $\Delta b_i, i\in \{1, \ldots,m \}$.}
Specifically, constructing stealth attacks only necessitates knowledge regarding the protected branches and the Jacobian matrix before MTD.
This underscores the importance for the operator to secure information pertaining to the MTD protected branches.

The following lemma shows that the stealthiness of the attacks in Theorem \ref{Theorem:3} also relies on the fact that the attacks do not change the power flow over the branches protected by MTD, which is the same condition as in Lemma \ref{lemma2} for the Single Branch MTD case.

\begin{lemma}\label{lemma3}
Let $b: {\cal E} \to \{1, \ldots,l \}$ be the function whose input is a branch and output is the index of that branch.
Assume that the measurements are sorted as \eqref{Equ:JacoPhy}. 
The attack $\av$ constructed via Theorem \ref{Theorem:3} does not corrupt the power flow on the branch belongs to $ {\cal E_{MTD}}$, i.e. 
\begin{IEEEeqnarray}{c}
 \forall  \mathsf{e} \in {\cal E_{MTD}}, \quad a_{n+1+b(\mathsf{e})} = 0 \ \ \textnormal{and} \ \ a_{n+1+l+b(\mathsf{e})} = 0, \IEEEeqnarraynumspace \label{Equ:MBMTD_S}
\end{IEEEeqnarray}
where $a_{n+1+b(\mathsf{e})}$ and  $a_{n+1+l+b(\mathsf{e})}$ are the injected attack to the power flow on $ \mathsf{e}$.
\end{lemma}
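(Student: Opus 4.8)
The plan is a direct computation of the relevant entries of $\av = \Hm\cv$ using the block structure of $\Hm$. Since the measurements are stacked as in \eqref{Equ:JacoPhy} --- the injections $\pv$ first, then the flows $\fv$ indexed by branch, then $-\fv$ --- the flow measurement on a protected branch $\mathsf{e}$ with index $k = b(\mathsf{e})$ occupies row $n+1+k$ of $\Hm$ and the reverse-flow measurement occupies row $n+1+l+k$. By \eqref{Equ:JacoPhyFR1} the middle block of $\Hm$ is $\Dm\,\Am_{\cdot,\{2,\ldots,n+1\}}$ and the third block is its negative, so $a_{n+1+k} = b_k\,(\Am_{\cdot,\{2,\ldots,n+1\}})_{k,\cdot}\,\cv$ and $a_{n+1+l+k} = -a_{n+1+k}$. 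Thus it suffices to show $a_{n+1+k}=0$ for every $\mathsf{e}\in{\cal E_{MTD}}$.

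First I would expand $(\Am_{\cdot,\{2,\ldots,n+1\}})_{k,\cdot}\,\cv = \sum_{i=2}^{n+1} A_{k,i}\,c_{i-1}$. By the definition of the incidence matrix, row $k$ of $\Am$ has a single $+1$ in column $f(\mathsf{e})$, a single $-1$ in column $t(\mathsf{e})$, and zeros elsewhere; deleting the reference column therefore leaves three cases. If $f(\mathsf{e})\neq 1$ and $t(\mathsf{e})\neq 1$, the product equals $c_{f(\mathsf{e})-1} - c_{t(\mathsf{e})-1}$; if $f(\mathsf{e})=1$, it equals $-c_{t(\mathsf{e})-1}$; and if $t(\mathsf{e})=1$, it equals $c_{f(\mathsf{e})-1}$ (the case $f(\mathsf{e})=t(\mathsf{e})=1$ is excluded because the grid has no self-loops).

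Next I would substitute the constraints that Theorem \ref{Theorem:3} places on $\cv$. Under the first regime, \eqref{Equ:MBMTDCon1} gives $c_{f(\mathsf{e})-1}=c_{t(\mathsf{e})-1}$ for every $\mathsf{e}\in{\cal E_{MTD}}$, and since \eqref{Equ:MBMTDCon0} rules out branches incident to the reference bus, only the first of the three cases occurs; hence $a_{n+1+k}=0$. Under the second regime, the three lines \eqref{Equ:MBMTDCon21}--\eqref{Equ:MBMTDCon23} are matched one-to-one to the three cases above --- $c_{f(\mathsf{e})-1}=0$ when $t(\mathsf{e})=1$, $c_{t(\mathsf{e})-1}=0$ when $f(\mathsf{e})=1$, and $c_{f(\mathsf{e})-1}=c_{t(\mathsf{e})-1}$ otherwise --- so in each case the expression from the previous step vanishes and $a_{n+1+k}=0$. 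Combining with $a_{n+1+l+k}=-a_{n+1+k}$ yields \eqref{Equ:MBMTD_S}.

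The argument is routine bookkeeping; the only point that deserves care --- and the main place an off-by-one error could creep in --- is the index shift $i\mapsto i-1$ induced by removing the reference column from $\Am$, together with the attendant split for branches touching bus $1$. This is exactly the mechanism behind Lemma \ref{lemma2}, of which this lemma is the multi-branch generalization, so I would organize the write-up to parallel the proof of Lemma \ref{lemma2}.
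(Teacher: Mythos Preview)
Your proposal is correct and follows essentially the same approach as the paper: the paper's proof simply observes that for each $\mathsf{e}\in{\cal E_{MTD}}$ the constraints \eqref{Equ:MBMTDCon1} or \eqref{Equ:MBMTDCon2} specialize to the single-branch constraints \eqref{Equ:SBMTDCon1} or \eqref{Equ:SBMTDCon2} and then invokes Lemma~\ref{lemma2}, whereas you unpack that same row-of-$\Hm$ computation directly. The only cosmetic difference is that the paper cites Lemma~\ref{lemma2} as a black box while you redo its calculation inline --- the content is identical, and your remark that the write-up should parallel Lemma~\ref{lemma2} is exactly right.
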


\begin{proof}
The proof is provided in Appendix \ref{Sec:Proof6}.
\end{proof}

\subsection{Countermeasuring MTD-Resilient Stealth Attacks}
Another significant insight from Lemma \ref{lemma3} is that if the power flows on all branches are not changed, the system can be fully protected, i.e. MTD-resilient stealth attack in Theorem \ref{Theorem:3} cannot be constructed. 
This insight is formally articulated in the following lemma, in which a countermeasure for the MTD-resilient stealth attacks in Theorem \ref{Theorem:3} is proposed via graph-theoretic tools, specifically using a Spanning Tree \cite[Chapter 2.2]{west_2001_introduction} based argument. 

\begin{lemma}\label{lemma4}
Let $\Gc(\Vc,\Ec)$ be the undirected graph formed by the set of buses $\Vc$ and the set of branches $\Ec$ with branch-bus incidence matrix $\Am$. 
If the subgraph $\Gc'(\Vc,{\cal E_{MTD}})$ is a spanning tree of $\Gc(\Vc,\Ec)$, there does not exist a vector $\cv \in \{ \Rc^{n}{\setminus}\zerov\}$ satisfying the condition in \eqref{Equ:MBMTDCon1} or \eqref{Equ:MBMTDCon2}. 
\end{lemma}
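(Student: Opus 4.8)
The plan is to translate the algebraic constraints \eqref{Equ:MBMTDCon1} and \eqref{Equ:MBMTDCon2} into the single statement that a certain bus-indexed function is constant along every protected branch, and then exploit the connectivity of a spanning tree to force that function to vanish identically. First I would recall that $\cv$ is indexed by the state variables, i.e. by the buses $2, \ldots, n+1$, since the reference-bus column was deleted from the Jacobian in \eqref{Equ:JacoPhyFR}. Accordingly I would introduce the extended vector $\tilde{\cv} \in \Rc^{n+1}$, indexed by all buses, with $\tilde{c}_1 \triangleq 0$ for the reference bus and $\tilde{c}_j \triangleq c_{j-1}$ for $j \in \{2, \ldots, n+1\}$.

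The first step is a short case check showing that, for each branch $\mathsf{e} \in {\cal E_{MTD}}$ with $f(\mathsf{e}) = u$ and $t(\mathsf{e}) = v$, the hypotheses \eqref{Equ:MBMTDCon1} and \eqref{Equ:MBMTDCon2} are both equivalent to the identity $\tilde{c}_u = \tilde{c}_v$: when neither endpoint is the reference bus this is literally $c_{u-1} = c_{v-1}$; when $v = 1$ it reads $c_{u-1} = 0 = \tilde{c}_1$, i.e. $\tilde{c}_u = \tilde{c}_v$; and symmetrically when $u = 1$. Hence any $\cv$ fulfilling the forbidden condition of Lemma \ref{lemma4} yields an extended vector $\tilde{\cv}$ that assigns equal values to the two endpoints of every edge of ${\cal E_{MTD}}$.

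The second step uses the hypothesis that $\Gc'(\Vc, {\cal E_{MTD}})$ is a spanning tree. Since a spanning tree is connected and incident to every bus, there is a path in ${\cal E_{MTD}}$ from bus $1$ to any bus $j$; applying the edgewise equality from the first step along this path gives $\tilde{c}_j = \tilde{c}_1 = 0$. Therefore $\tilde{\cv} = \zerov$, which forces $\cv = \zerov$, contradicting $\cv \in \{ \Rc^{n}{\setminus}\zerov\}$, and the lemma follows. Equivalently, one may phrase this via the incidence matrix of $\Gc'$: a spanning tree on $n+1$ buses has $n$ edges and edge-rank $n$, so the homogeneous system $\tilde{c}_u = \tilde{c}_v$ over all $(u,v) \in {\cal E_{MTD}}$ has solution space exactly the constant vectors, and the normalization $\tilde{c}_1 = 0$ again pins down $\tilde{\cv} = \zerov$.

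I do not anticipate a genuine obstacle; the only care required is the index shift induced by deleting the reference-bus column, and the small case analysis needed to merge \eqref{Equ:MBMTDCon1} and \eqref{Equ:MBMTDCon2} into the uniform claim that $\tilde{\cv}$ is constant on each protected edge. As a by-product, the same rank count shows that $n$ is the minimum number of protected branches for which such $\cv$ can be excluded, attained precisely when ${\cal E_{MTD}}$ forms a spanning tree, so this argument also supplies the minimality statement referred to in the discussion of Lemma \ref{lemma4}.
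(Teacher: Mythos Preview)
Your proposal is correct and follows essentially the same approach as the paper: both arguments observe that the constraints force equal values at the endpoints of every protected edge, then use the connectivity of a spanning tree to propagate the zero at the reference bus to all other buses, yielding $\cv=\zerov$. Your introduction of the extended vector $\tilde{\cv}$ with $\tilde{c}_1=0$ is a clean device that merges \eqref{Equ:MBMTDCon1} and \eqref{Equ:MBMTDCon2} into the single statement ``$\tilde{\cv}$ is constant on each protected edge,'' whereas the paper keeps the case split more explicit; but the underlying path-propagation argument is identical.
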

\begin{proof}
The proof is provided in Appendix \ref{Sec:Proof7}.
\end{proof}

Given the fact that any power system can be modeled by a connected and undirected graph, the spanning tree of the power system can always be constructed \cite[Chapter 2.2]{west_2001_introduction}. 
The spanning tree with the minimum number of edges (or branches) is the {\it Minimum Spanning Tree (MST)}. 
Note that for the MST problem, different weights can be assigned to different edges. 
Here, due to the fact that the results in Theorem \ref{Theorem:3} hold for any value of $\Delta b_i,  \i \in \{1, \ldots, l \}$, the weight on a particular edge (or branch) can be understood as the difficulty of changing the admittance of that branch. 
There are various algorithms for solving the MST problem, which are mainly greedy algorithms, such as the Prim's algorithm and Kruskal's algorithm \cite{graham_1985_history}.

It is worth mentioning that the results in Theorem \ref{Theorem:3} and Lemma \ref{lemma4} hold for any value of $\Delta b_i, i \in \{1, \dots, l\}$. 
Hence, the operator can just alter the admittance of branches belonging to ${\cal E_{MTD}}$ with small perturbations to guarantee the effectiveness of the MTD against stealth attacks, which suggests that there are defense strategies available to the operator against these types of attacks.

\section{Simulations}\label{Sec:Sim}
In this section, we evaluate the performance of the MTD-resilient stealth attacks constructed using Theorem \ref{Theorem:2} and Theorem \ref{Theorem:3} and of the countermeasure strategy presented in Lemma \ref{lemma4} within practical state estimation scenarios. 
Specifically, we use the IEEE test systems, whose parameters and topology are obtained from MATPOWER \cite{zimmerman_matpower:_2011}.
Without loss of generality, we arrange the measurements in the vector $\yv$ in \eqref{Equ:JacoPhy}, use the branch indexes provided in the test case data file of MATPOWER, and select the first bus (i.e. the bus with index one) as the reference bus.

\subsection{Assessment of the Single Branch MTD Case}
For the Single Branch MTD case, we focus on the scenario that assumes that the attacker lacks knowledge about which branch is protected. In this setting, the attacker is thus compelled to construct the attack as $\av = \Hm\cv$, i.e. without any MTD robustness in the construction. 
Specifically, we aim to evaluate the effectiveness of Single Branch MTD in such circumstances. 
To that end, we randomly generate the injected error $\cv$ following a standard multivariate uniform distribution, and compare the probability of detection with and without Single Branch MTD.
To model an additive white Gaussian noise scenario in the observation model, we set the covariance matrix of the noise $\Sigmam$ to $\sigma^2 \Id$ with $\sigma=3$ and $\Id$ denoting the indentity matrix.
Additionally, the threshold $\tau$ is set to $\sigma^2 \chi^2_{m-n, 0.95}$, which sets the probability of false alarm to $0.05$.

Fig. \ref{Fig:SBMTD_30} shows the increase in probability of detection when the operator protects different branches in the IEEE 30-Bus test system, for which we randomly generate $20 \times 41$ attacks (the number of branches is $41$ in IEEE 30-Bus system) and set the changes in the admittance of branch $\mathsf{e}_i, \forall i \in \{1, \ldots, 41 \}$ to 
\begin{IEEEeqnarray}{c}
\Delta b_i = \alpha b_i.
\end{IEEEeqnarray}
Notice that the the increase in probability of detection is minor, which indicates that the Single Branch MTD has limited capability to protect against the DIAs. 
It is worth mentioning that the increase in probability of detection increases when the magnitude of the corruption injected into the buses connected by the protected bus increases. 
However, the operator is likely to detect the abnormality in system if the estimated states deviate significantly from the normal values.

\begin{figure}[t!]
\centering
\includegraphics[scale=0.465]{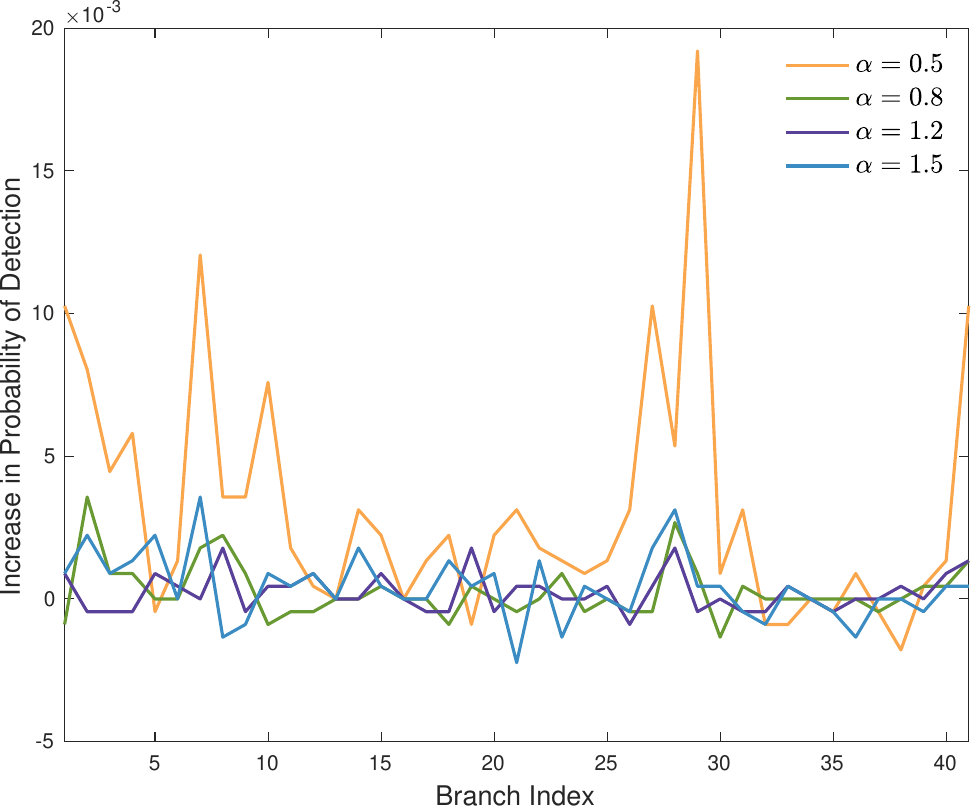}
\caption{Increase in probability of detection when the operator protects different branches in the IEEE 30-Bus test system.}
\label{Fig:SBMTD_30}
\end{figure}

\subsection{Assessment of Multiple Branch MTD Case}
For the Multiple Branch MTD case, intuition suggests that the probability of constructing MTD-resilient stealth attacks via Theorem \ref{Theorem:3} decreases when the number of protected branches increases. 
To validate this intuition, we use the ratio between the number of protected branches and the total number of branches in the system as the $x$-axis, and randomly generate the index of the protected branches for $1,000$ realizations to estimate the probability of MTD-resilient stalth attacks existence.
The result is presented in Fig. \ref{Fig:MBMTD_success}.
It can be found that if the operator protects less than $40\%$ of the branches, the MTD-resilient stealth attack in Theorem \ref{Theorem:3} always exists. 
And as the number of the protected branches increases, the probability of existence starts to fall and reaches zero when all the branches are protected. 

Another interesting observation that can be distilled from Fig. \ref{Fig:MBMTD_success} is that as the size of the system increases, the ratio of branches that need to be protected also increases, i.e. large scale systems are more vulnerable to MTD-resilient stealth attacks.  
From a practical standpoint,  in large scale systems, the operator must protect a higher proportion of branches to achieve an equivalent level of protection.

\begin{figure}[t!]
\centering
\includegraphics[scale=0.48]{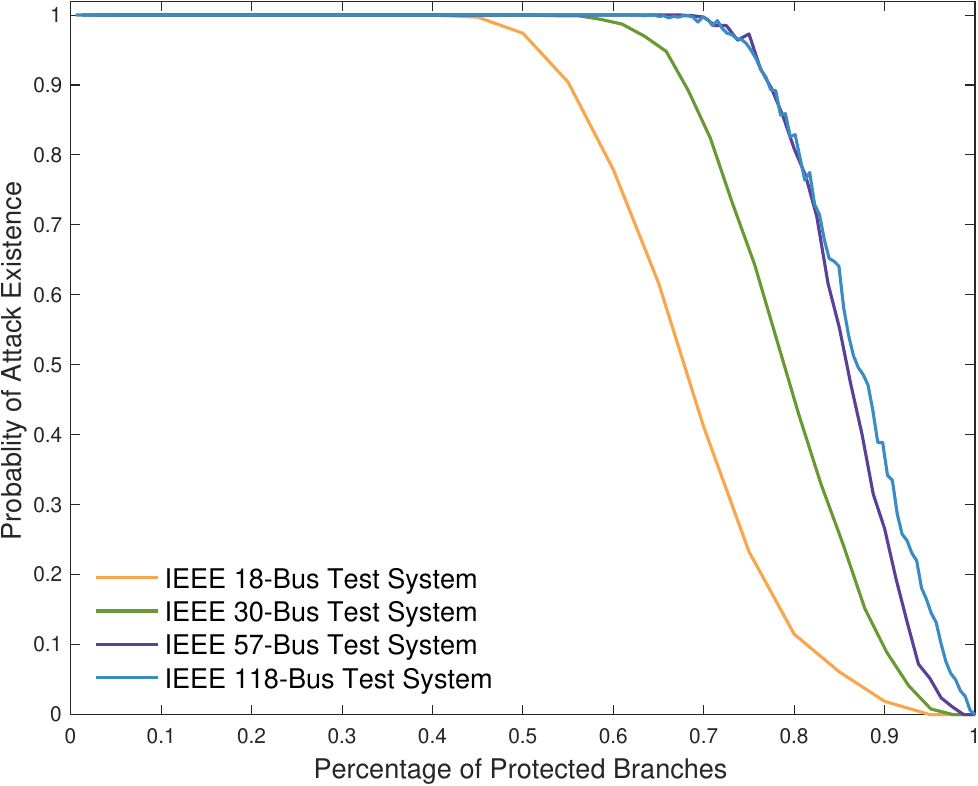}
\caption{Probability of MTD-resilient stealth attack in Theorem \ref{Theorem:3} existence when the operator randomly chooses the protected branches.}
\label{Fig:MBMTD_success}
\end{figure}

\subsection{Simulation of Countermeasure}
Fig. \ref{Fig:14_MST} depicts in red the branches in the MST of IEEE 14-Bus test system \cite{leon_2020_quadratically} when the branches are equally weighted. 
Note that Prim's algorithm is used to find the MST of the grid. 
As depicted, the MST starts from the reference bus, i.e. Bus 1, and spreads to the buses with larger indices. 
Table \ref{Tab:MST} shows the percentage of branches in the MST when the branches are equally weighted, with the percentage defined as the ratio between the number of branches in the MST and the total number of branches in the grid. 
It can be observed that the operator has to protect around $62\% - 72\%$ of the branches to form an MST, which guarantees that the stealth attack constructions described in Theorem \ref{Theorem:2} and Theorem \ref{Theorem:3} are not feasible. 

\begin{figure}[t!]
\centering
\includegraphics[scale=0.35]{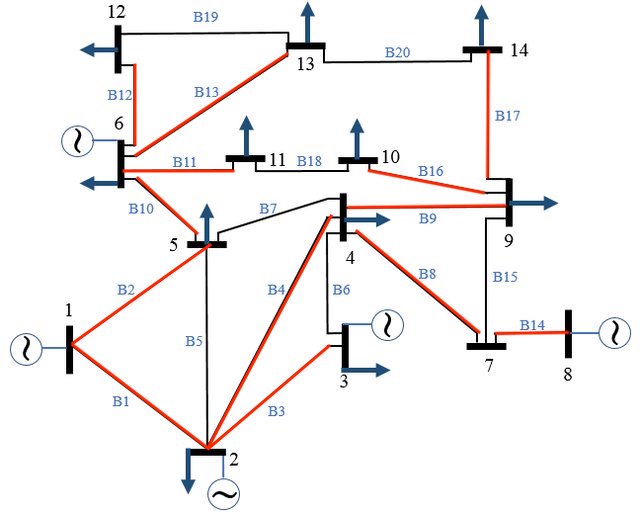}
\caption{Branches in the MST of IEEE 14-Bus test system when the branches are equally weighted.}
\label{Fig:14_MST}
\end{figure}

\begin{table}
\renewcommand{\arraystretch}{1.3}
\caption{Number of branches in the MST with equal weights}
\label{Tab:MST}
\centering
\begin{tabular}{c|c|c}
\hline
\bfseries Test System & \bfseries Number of Branches in MST & \bfseries Percentage \\
\hline
IEEE 14-Bus System & 13 & 0.65\\
\hline
IEEE 30-Bus System & 29 & 0.707\\
\hline
IEEE 57-Bus System & 56 & 0.7\\
\hline
IEEE 118-Bus System & 117 & 0.629 \\
\hline
IEEE 300-Bus System & 299 & 0.7275 \\
\hline
\end{tabular}
\end{table}

Table \ref{Tab:MST1} shows the number of branches in the MST when the weight of the branch is modeled by the function $w: {\cal E} \to \Rc$ such that 
\begin{IEEEeqnarray}{c}\label{Equ:weight}
w\left( \mathsf{e}_k \right) \propto b_k
\end{IEEEeqnarray}
with $b_k$ in \eqref{Equ:b_k}. This choice models the fact that it is more difficult to alter branches with large admittance values.
It is interesting to note that there is no difference between the case with equal weights and the case with the weights satisfying \eqref{Equ:weight}. 
This observation holds even for the case in which the weight of branch is randomly assigned. 
This suggests that the topology of the grid, not the electrical characteristics describing the system, is what governs the vulnerability of the system to stealth attacks.
This can be perhaps explained by the fact that the tree-structure of the transmission or distribution systems and its relevance to the minimum cost principle of electricity transmission \cite{li_2011_probabilistic}. 

\begin{table}
\renewcommand{\arraystretch}{1.3}
\caption{Number of branches in the MST with branch weight satisfying \eqref{Equ:weight}}
\label{Tab:MST1}
\centering
\begin{tabular}{c|c|c}
\hline
\bfseries Test System & \bfseries Number of Branches in MST & \bfseries Percentage \\
\hline
IEEE 14-Bus System & 13 & 0.65\\
\hline
IEEE 30-Bus System & 29 & 0.707\\
\hline
IEEE 57-Bus System & 56 & 0.7\\
\hline
IEEE 118-Bus System & 117 & 0.629 \\
\hline
IEEE 300-Bus System & 299 & 0.7275 \\
\hline
\end{tabular}
\end{table}

\section{Conclusion}\label{Sec:Con}
In this paper, we have demonstrate the feasibility of stealth attacks for power systems protected by the MTD strategy. 
To do this, we proved that an attacker only needs the information about which branches are protected in order to successfully construct an MTD-resilient stealth attack.
We also have shown that the stealth of these attacks relies on the fact that the attacks are undetectable if the attacker does not compromise the power flow of the protected branches. 
Based on this observation, we have proposed a countermeasure by ensuring that the branches protected by MTD form a spanning tree in the system. 
Simulations on the IEEE test systems reveal that protecting a single branch provides limited protection in way of increasing the probability of attack detection. 
Moreover, it is concluded from the numerical analysis that the spanning tree of the power system is unlikely to change even when different weights are assigned to the branches, which suggests that the vulnerability of the system to stealth attacks is governed by the topology of the system and not by its electrical characteristics. 

\bibliographystyle{IEEEtran}
\bibliography{reference}

\appendices

\section{Proof of Lemma \ref{lemma:rank}}\label{Sec:Proof0}
We begin by noting that 
\begin{IEEEeqnarray}{c}
\begin{bmatrix}
\Am^{\sf T} \Dm \Am \\
 \Dm \Am \\
 - \Dm \Am 
\end{bmatrix} =
 \begin{bmatrix}
\Am^{\sf T}  \\
 \Id \\
 - \Id  
\end{bmatrix}
 \Dm \Am ,
\end{IEEEeqnarray}
where $\Id$ is the identity matrix of dimension $l$. 
Furthermore, it holds that any row of $\Am^{\sf T}$ can be expressed as a linear combination of the rows of $\Id$. 
Given the fact that the rank of an identity matrix is  given by its dimension, it holds that 
\begin{IEEEeqnarray}{c}
 \textnormal{rank} 
\left( \begin{bmatrix}
\Am^{\sf T}  \\
 \Id \\
 - \Id  
\end{bmatrix} \right) =l.
\end{IEEEeqnarray}

As a result, it holds that 
\begin{IEEEeqnarray}{c}
\textnormal{rank} 
\left( 
\begin{bmatrix}
\Am^{\sf T} \Dm \Am \\
 \Dm \Am \\
 - \Dm \Am 
\end{bmatrix}
\right) = 
\textnormal{rank} 
\left( 
 \begin{bmatrix}
\Am^{\sf T}  \\
 \Id \\
 - \Id  
\end{bmatrix}
 \Dm \Am 
\right)
= \textnormal{rank} \left(  \Dm \Am \right). \IEEEeqnarraynumspace \squeezeequ
\end{IEEEeqnarray}
The proof is completed by noticing that $\Dm$ is a full rank matrix and the rank of $ \Am$ is $n$ \cite[Lemma 1]{kettner_2017_properties}, which implies that 
\begin{IEEEeqnarray}{c}
\textnormal{rank} \left(  \Dm \Am \right) = \textnormal{rank} \left(  \Am \right) = n.
\end{IEEEeqnarray}

\section{Proof of Lemma \ref{lemma:N1}}\label{Sec:Proof1}
We describe the diagonal branch admittance matrix as
\begin{IEEEeqnarray}{c}
\Dm =  \sum_{i=1}^l b_i \Em_{ii},
\end{IEEEeqnarray}
where $\Em_{ii} \in \Rc^{l \times l}$ is a diagonal matrix with zeros in all entries with the exception of the $i$-th diagonal entry, which is one.
It holds that 
\begin{IEEEeqnarray}{rl}
\Dm \Am &= \sum_{i=1}^l b_i \Em_{ii} \Am, \label{Equ:temp_1}
\end{IEEEeqnarray}
and therefore 
\begin{IEEEeqnarray}{rl}
\Am^{\sf T}\Dm \Am &= \Am^{\sf T} \left( \sum_{i=1}^l b_i \Em_{ii} \Am\right) = \sum_{i=1}^l b_i  \Am^{\sf T} \Em_{ii} \Am, \label{Equ:temp_2}
\end{IEEEeqnarray}
where $\Em_{ii}\Am \in \Rc^{l \times (n+1)}$ is matrix of zeros with the exception of the $i$-th row, which is the row vector $\Am_{i, \cdot}$; 
and $\Am^{\sf T} \Em_{ii} \Am \in \Rc^{(n+1) \times (n+1)}$ is given by
\begin{IEEEeqnarray}{c}
\Am^{\sf T} \Em_{ii} \Am =
\begin{bmatrix}
a_{i,1}  \Am_{i, \cdot}\\
\vdots \\
a_{i,n} \Am_{i, \cdot}
\end{bmatrix}.
\end{IEEEeqnarray}
Note that the $i$-th row of $\Am$, i.e.  $\Am_{i, \cdot}$, only has two nonzero entries (either $1$ or $-1$), which are given by 
\begin{IEEEeqnarray}{c}
\Am_{i, \cdot} = 
 \begin{blockarray}{ccccc}
   \dots & f(\mathsf{e}_i) & \dots & t(\mathsf{e}_i)  \\
 \begin{block}{[ccccc]}
  0 &1 &\cdots &-1 & 0  \\
 \end{block}
 \end{blockarray}.
 \end{IEEEeqnarray}
 Hence, from \eqref{Equ:temp_1} and \eqref{Equ:temp_2}, it holds that 
 \begin{IEEEeqnarray}{c}
\begin{bmatrix}
\Am^{\sf T} \Dm \Am \\
 \Dm \Am \\
 - \Dm \Am 
\end{bmatrix}
= \sum_{i=1}^l b_i
\begin{bmatrix}
\Am^{\sf T} \Em_{ii} \Am \\
\Em_{ii} \Am\\
 - \Em_{ii} \Am
\end{bmatrix}
=\sum_{i=1}^l b_i \Hm_i. \label{Equ:temp3}
 \end{IEEEeqnarray}

The proof is completed by combining \eqref{Equ:JacoPhyFR} with \eqref{Equ:temp3}.

%
%
%

\section{Proof of Theorem \ref{Theorem:2}}\label{Sec:Proof3}

If it holds that ${\cal E_{MTD}} = \{ \mathsf{e}_k\}$, from Lemma \ref{lemma:N1}, the Jacobian matrix in \eqref{Equ:Jac_MTD} satisfies 
\begin{IEEEeqnarray}{c}
\Hm' = \Hm + \Delta b_k \left[\Hm_k\right]_{\cdot, \{2, \ldots, n+1 \}}, \label{Equ:SBHchange}
\end{IEEEeqnarray}
where $\Hm_k$ is defined in \eqref{Equ:HmK}.

Consider the case for which $f(\mathsf{e}_k) \neq 1$ and $t(\mathsf{e}_k) \neq 1$ first. 
For all $\cv \in  \{ \Rc^{n}{\setminus}\zerov\}$ that satisfy the condition in \eqref{Equ:SBMTDCon1}, 
it follows that \eqref{Equ:SBHchange} implies
\begin{IEEEeqnarray}{rl}
 \Hm \cv & = \left(\Hm' - \Delta b_k \left[\Hm_k\right]_{\cdot, \{2, \ldots, n+1 \}}\right) \cv \\
 &= \Hm' \cv  - \Delta b_k \left[\Hm_k\right]_{\cdot, \{2, \ldots, n+1 \}}\cv \\
& = \Hm' \cv, \label{Equ:Key}
\end{IEEEeqnarray}
where the equality in \eqref{Equ:Key} follows from the fact that the matrix $\Hm_k$ in \eqref{Equ:HmK} satisfies 
\begin{IEEEeqnarray}{c}
\left[\Hm_k\right]_{\cdot, \{2, \ldots, n+1 \}} \cv = \zerov
\end{IEEEeqnarray}
for $f(\mathsf{e}_k) \neq 1$ and $t(\mathsf{e}_k) \neq 1$. 

Hence, for the attack constructed as in \eqref{Equ:SBMTD}, it holds that
\begin{IEEEeqnarray}{rl}\label{Equ:Key1}
r'(\yv_a) = r'(\yv + \av) = r'(\yv + \Hm\cv) &= r'(\yv + \Hm'\cv) \label{Equ:Key2}\\
&= r'(\yv),\label{Equ:Key3}
\end{IEEEeqnarray}
where $r'$ is defined in \eqref{Equ:ResN}; the equality in \eqref{Equ:Key2} follows from \eqref{Equ:Key}; 
and the equality in \eqref{Equ:Key3}  follows from Theorem \ref{Theorem:Liu}.
Furthermore, it holds that 
\begin{IEEEeqnarray}{c}
\Km' \yv_a = \Km' \left( \yv + \Hm\cv\right) =  \Km' \left( \yv + \Hm'\cv\right) = \Km'  \yv +\cv,\IEEEeqnarraynumspace
\end{IEEEeqnarray}
where $\Km'$ is defined in \eqref{Equ:WLSN}.
Hence, for all $\cv \in \{ \Rc^{n}{\setminus}\zerov\}$ such that the condition in \eqref{Equ:SBMTDCon1} holds, we have that 
\begin{IEEEeqnarray}{c}
\Km' \yv_a \neq \Km'  \yv.\label{Equ:Key4}
\end{IEEEeqnarray}

Note that \eqref{Equ:Key3} and \eqref{Equ:Key4} comply with Definition \ref{Def:SMTD}.
As a result, for all $\cv \in  \{ \Rc^{n}{\setminus}\zerov\}$ such that the condition in \eqref{Equ:SBMTDCon1} holds, the attacks constructed as \eqref{Equ:SBMTD} are MTD-resilient stealth attacks under the given Single Branch MTD strategy when $f(\mathsf{e}_k) \neq 1$ and $t(\mathsf{e}_k) \neq 1$.

Then, consider the case in which $f(\mathsf{e}_k) = 1$ or $t(\mathsf{e}_k) = 1$. 
For all $\cv \in  \{ \Rc^{n}{\setminus}\zerov\}$ such that the condition in \eqref{Equ:SBMTDCon2} holds, 
from \eqref{Equ:SBHchange}, it holds that 
\begin{IEEEeqnarray}{rl}
 \Hm \cv & = \left(\Hm' - \Delta b_k \left[\Hm_k\right]_{\cdot, \{2, \ldots, n+1 \}}\right) \cv \\
 &= \Hm' \cv  - \Delta b_k \left[\Hm_k\right]_{\cdot, \{2, \ldots, n+1 \}}\cv \label{Equ:OY}\\
& = \Hm' \cv, \label{Equ:OY1}
\end{IEEEeqnarray}
where the matrix $\left[\Hm_k\right]_{\cdot, \{2, \ldots, n+1 \}}$ in \eqref{Equ:OY} is a matrix that either the $(f(\mathsf{e}_k)-1)$-th or the $(t(\mathsf{e}_k)-1)$-th column is the only nonzero column; 
the equality in \eqref{Equ:OY1} follows from the fact that
\begin{IEEEeqnarray}{c}
\left[\Hm_k\right]_{\cdot, \{2, \ldots, n+1 \}} \cv = \zerov. \label{Equ:OY4}
\end{IEEEeqnarray}
Hence, for the attack constructed as \eqref{Equ:SBMTD}, it holds that
\begin{IEEEeqnarray}{rl}\label{Equ:Key1}
r'(\yv_a) = r'(\yv + \av) = r'(\yv + \Hm\cv) &= r'(\yv + \Hm'\cv) \label{Equ:OY2}\\
&= r'(\yv),\label{Equ:OY3}
\end{IEEEeqnarray}
where $r'$ is defined in \eqref{Equ:ResN}, the equality in \eqref{Equ:OY2} follows from \eqref{Equ:OY1}; 
and the equality in \eqref{Equ:OY3}  follows from Theorem \ref{Theorem:Liu}.

Furthermore,  for all $\cv \in \{ \Rc^{n}{\setminus}\zerov\}$ such that the condition in \eqref{Equ:SBMTDCon2} holds, we have that 
\begin{IEEEeqnarray}{c}
\Km' \yv_a = \Km' \left( \yv + \Hm\cv\right) =  \Km' \left( \yv + \Hm'\cv\right) = \Km'  \yv +\cv \neq  \Km'  \yv ,\IEEEeqnarraynumspace \label{Equ:OY6} \squeezeequ
\end{IEEEeqnarray}
where $\Km'$ is defined in \eqref{Equ:WLSN}.

Note that \eqref{Equ:OY3} and \eqref{Equ:OY6} comply with Definition \ref{Def:SMTD}.
As a result, for all $\cv \in  \{ \Rc^{n}{\setminus}\zerov\}$ such that the condition in \eqref{Equ:SBMTDCon2} holds, the attacks constructed as \eqref{Equ:SBMTD} are MTD-resilient stealth attacks under the Jacobian matrix $\Hm'$ when $f(\mathsf{e}_k) = 1$ or $t(\mathsf{e}_k) = 1$.

This completes the proof. 
\section{Proof of Lemma \ref{lemma2}}\label{Sec:Proof4}

Consider the attack $\av$ constructed via Theorem \ref{Theorem:2}, from \eqref{Equ:Key} and \eqref{Equ:OY1}, it holds that 
\begin{IEEEeqnarray}{c}
\Hm\cv = \Hm'\cv. 
\end{IEEEeqnarray}
Under the Single Branch MTD and the attack in \eqref{Equ:SBMTD}, the measurement model is given by 
\begin{IEEEeqnarray}{c}
\yv_a = \Hm' \xv + \zv + \Hm'\cv.
\end{IEEEeqnarray}

Consider the case in which $f(\mathsf{e}_k) \neq 1$ and $t(\mathsf{e}_k) \neq 1$ first. 
From \eqref{Equ:Jac_MTD} and Lemma \ref{lemma:N1}, the $(n+1+k)$-th row of matrix $\Hm'$ is given by 
\begin{IEEEeqnarray}{c}
\Hm_{n+1+k, \cdot} = \left[ 0, \ldots, 1 \ldots, -1, \ldots \right], 
\end{IEEEeqnarray}
in which the nonzero entries are the $(f(\mathsf{e}_k)-1)$-th entry and  the $(t(\mathsf{e}_k)-1)$-th entry. 
Hence,  for any $\cv \in \Rc^{n}$ satisfying \eqref{Equ:SBMTDCon1}, 
it holds that 
\begin{IEEEeqnarray}{c}
a_{n+1+k} = \Hm_{n+1+k, \cdot} \cv =0. 
\end{IEEEeqnarray}

Then consider the case in which $f(\mathsf{e}_k) = 1$ or $t(\mathsf{e}_k) = 1$.
From \eqref{Equ:Jac_MTD} and Lemma \ref{lemma:N1}, the $(n+1+k)$-th row of matrix $\Hm'$ is given by 
\begin{IEEEeqnarray}{rl}
\Hm_{n+1+k, \cdot} = \left[ 0, \ldots, -1 \ldots, 0 \right] & \ \textnormal{ if } \ f(\mathsf{e}_k) = 1, \\
\Hm_{n+1+k, \cdot} = \left[ 0, \ldots, 1 \ldots, 0\right] & \ \textnormal{ if } \ t(\mathsf{e}_k) = 1, 
\end{IEEEeqnarray}
in which the nonzero entry is the $(t(\mathsf{e}_k) -1)$-th entry if $f(\mathsf{e}_k) = 1$, and is the $(f(\mathsf{e}_k) -1)$-th entry if $t(\mathsf{e}_k) = 1$.
Hence,  for any $\cv \in \Rc^{n}$ satisfying \eqref{Equ:SBMTDCon2}, 
it holds that 
\begin{IEEEeqnarray}{c}
a_{n+1+k} = \Hm_{n+1+k, \cdot} \cv =0. 
\end{IEEEeqnarray}

The proof is completed by noticing that the $(n+1+l+k)$-th element of $\av$ is just the negative of $(n+1+k)$-th element.

\section{Proof of Theorem \ref{Theorem:3}}\label{Sec:Proof5}

Without loss of generality, we assume that the first $k$ ($k \leq l$) branches are protected by the MTD strategy, i.e. 
\begin{IEEEeqnarray}{c}
{\cal E_{MTD}} = \{ \mathsf{e}_1, \ldots, \mathsf{e}_k\}.
\end{IEEEeqnarray}
Similarly as \eqref{Equ:SBHchange}, it holds that
\begin{IEEEeqnarray}{c}
\Hm' = \Hm + \sum_{i=1}^k \Delta b_i \left[\Hm_k\right]_{\cdot, \{2, \ldots, n+1 \}}, 
\end{IEEEeqnarray}
where $\Hm_{i}$ is given in \eqref{Equ:HmK}.

Consider the case in which the condition in \eqref{Equ:MBMTDCon0} holds. 
For all $\cv \in \{ \Rc^{n}{\setminus}\zerov\}$ such that the condition in \eqref{Equ:MBMTDCon1} holds, it holds that 
\begin{IEEEeqnarray}{c}
\Hm \cv = \Hm' \cv  - \sum_{i=1}^k \Delta b_i \left[\Hm_i \right]_{\cdot, \{2, \ldots, n+1 \}}\cv = \Hm' \cv, \label{Equ:MBKey}\IEEEeqnarraynumspace
\end{IEEEeqnarray}
where the last equality follows from the fact that for all $i \in \{1, \ldots,k \}$,
\begin{IEEEeqnarray}{c}
\left[\Hm_i\right]_{\cdot, \{2, \ldots, n+1 \}} \cv = \zerov .
\end{IEEEeqnarray}
Hence, for the attack constructed as \eqref{Equ:MBMTD}, it holds that
\begin{IEEEeqnarray}{c}\label{Equ:MBKey1}
r'(\yv_a) = r'(\yv + \av) = r'(\yv + \Hm'\cv) = r'(\yv) ,
\end{IEEEeqnarray}
where $r'$ is defined in \eqref{Equ:ResN}, and the equality in \eqref{Equ:MBKey1} follows from \eqref{Equ:MBKey} and Theorem \ref{Theorem:Liu}.

From \eqref{Equ:WLSN}, it holds that 
\begin{IEEEeqnarray}{c}
\Km' \yv_a = \Km' \left( \yv + \Hm\cv\right) =  \Km' \left( \yv + \Hm'\cv\right) = \Km'  \yv +\cv. \IEEEeqnarraynumspace
\end{IEEEeqnarray}
Hence, for all $\cv \in \{ \Rc^{n}{\setminus}\zerov\}$ such that the condition in \eqref{Equ:MBMTDCon1} holds, it holds that 
\begin{IEEEeqnarray}{c}
\Km' \yv_a \neq \Km'  \yv.\label{Equ:MBKey4}
\end{IEEEeqnarray}

As a result, for all $\cv \in \{ \Rc^{n}{\setminus}\zerov\}$ such that the condition in \eqref{Equ:MBMTDCon1} holds, the attacks constructed as \eqref{Equ:MBMTD} are MTD-resilient stealth attacks under the given Multiple Branch MTD strategy.

Similarly, for the case in which the condition in \eqref{Equ:MBMTDCon3N} holds, if vector $\cv \in \{ \Rc^{n}{\setminus}\zerov\}$ satisfies the condition in \eqref{Equ:MBMTDCon2}, it holds that 
\begin{IEEEeqnarray}{c}
\Hm \cv = \Hm' \cv, \quad r'(\yv_a) = r'(\yv), \quad \textnormal{and} \quad \Km' \yv_a \neq \Km'  \yv, \IEEEeqnarraynumspace
\end{IEEEeqnarray}
which implies that the attacks constructed as \eqref{Equ:MBMTD} are MTD-resilient stealth attacks under the given Multiple Branch MTD strategy.

This completes the proof.

\section{Proof of Lemma \ref{lemma3}}\label{Sec:Proof6}
Consider the case in which the condition in \eqref{Equ:MBMTDCon0} holds. 
Note that if the vector $\cv \in \{ \Rc^{n}{\setminus}\zerov\}$ satisfies \eqref{Equ:MBMTDCon1}, then $\cv$ also satisfies \eqref{Equ:SBMTDCon1}. 
Then from Lemma \ref{lemma2}, for all $ \mathsf{e} \in {\cal E_{MTD}}$, it holds that 
\begin{IEEEeqnarray}{c}
a_{n+1+b(\mathsf{e})} = 0 \quad \textnormal{and} \quad a_{n+1+l+b(\mathsf{e})} = 0.
\end{IEEEeqnarray}

Then consider the case in which the condition in \eqref{Equ:MBMTDCon3N} holds.
For all $ \mathsf{e} \in {\cal E_{MTD}}$ such that $ f(\mathsf{e}_k) = 1$ or $t(\mathsf{e}_k) = 1$, if the vector $\cv \in \{ \Rc^{n}{\setminus}\zerov\}$ satisfies \eqref{Equ:MBMTDCon2}, then $\cv$ also satisfies \eqref{Equ:SBMTDCon2}. 
Then from Lemma \ref{lemma2}, it holds that 
\begin{IEEEeqnarray}{c}
a_{n+1+b(\mathsf{e})} = 0 \quad \textnormal{and} \quad a_{n+1+l+b(\mathsf{e})} = 0.
\end{IEEEeqnarray}
For the rest branches, i.e. $ \mathsf{e} \in {\cal E_{MTD}}$ such that $ f(\mathsf{e}_k) \neq 1$ and $t(\mathsf{e}_k) \neq 1$, if the vector $ \cv \in \{ \Rc^{n}{\setminus}\zerov\}$ satisfies \eqref{Equ:MBMTDCon2}, then $\cv$ also satisfies \eqref{Equ:SBMTDCon1}, which is already considered the first case. 

This completes the proof. 

\section{Proof of Lemma \ref{lemma4}}\label{Sec:Proof7}
Note that a spanning tree of an undirected graph is a tree which includes all of the vertices $\Vc$.
As a result, if $\Gc'$ forms a spanning tree for $\Gc$, the reference bus is definitely connected with some bus in $\Gc'$. 
To that end, the condition in \eqref{Equ:MBMTDCon0} does not holds. 
Hence, only the case in which the condition in \eqref{Equ:MBMTDCon3N} holds needs to be considered. 

When $\Gc'$ is a spanning tree of $\Gc$, it holds that any two vertices (or buses) are connected by a path.
Hence, the reference bus can be connected with the rest of the buses via a path, in which it holds for any branch $\mathsf{e}$ in this path that 
\begin{subequations}
\begin{IEEEeqnarray}{rl}
c_{f(\mathsf{e})-1} = 0, \quad &\textnormal{ if} \ t(\mathsf{e}) \ = 1,  \\
c_{t(\mathsf{e})-1} \ = 0, \quad &\textnormal{ if} \ f(\mathsf{e}) = 1,\\
c_{f(\mathsf{e})-1} = c_{t(\mathsf{e})-1} = 0,   \quad &\textnormal{ if} \ f(\mathsf{e}) \ \neq 1 \textnormal{ and }\ t(\mathsf{e}) \ \neq 1. \IEEEeqnarraynumspace 
\end{IEEEeqnarray}
\end{subequations}
As a result, the conditions in \eqref{Equ:MBMTDCon21} - \eqref{Equ:MBMTDCon23} hold only for $\cv = \zerov$, which is against the condition $\cv\neq\zerov$ in Theorem \ref{Theorem:3}. 
Hence, if the subgraph $\Gc'(\Vc,{\cal E_{MTD}})$ is a spanning tree of $\Gc(\Vc,\Ec)$, there does not exist a vector $\cv$ satisfying the condition in \eqref{Equ:MBMTDCon2}.

This completes the proof. 

\end{document}